\documentclass[conference]{IEEEtran}

\usepackage{cite}             
\usepackage[fleqn]{nccmath} 
\usepackage{amssymb} 
\usepackage{amsmath,bm}
\usepackage{amsthm} 
\interdisplaylinepenalty=2500
\usepackage{aligned-overset}
\usepackage[bookmarks=false]{hyperref}
\newtheoremstyle{mytheorem}
  {0pt}
  {0pt}
  {\itshape}
  {}
  {\bfseries}
  {:}
  {0.5em}
  {}

\theoremstyle{mytheorem}
\newtheorem{theorem}{Theorem}
\newtheorem{lemma}{Lemma}

\newtheorem{definition}{Definition}
\newtheorem{remark}{Remark}
\newcommand{\floor}[1]{\left \lfloor {#1} \right \rfloor }

\newcommand{\norm}[1]{\left\Vert#1\right\Vert}
\newcommand{\SNR}{\operatorname{SNR}}

\newcommand{\n}[1]{{\|\mathbf{#1}\|}} 

\newcommand{\R}{\mathbb{R}}

\newcommand{\cV}{{\cal V}}
\newcommand{\cC}{{\cal C}}

\def\b{\mathbf}
\NewDocumentCommand\sqn{mg}{%
    \|\mathbf{#1}_{\IfNoValueTF{#2}{}{#2}}\|^2%
}

\DeclareMathOperator{\EX}{\mathbb{E}}

\usepackage{graphicx}
\graphicspath{{figs/}}

\usepackage{tikz}
\usetikzlibrary{math} 
\usetikzlibrary{decorations.pathreplacing,calligraphy}
\usepackage{subfiles} 
\begin{document}

\title{PIR Over Wireless Channels: Achieving Privacy With Public Responses}

\author{%
 \IEEEauthorblockN{Or Elimelech and Asaf Cohen}
 \IEEEauthorblockA{\\The School of Electrical
and Computer Engineering\\
                   Ben-Gurion University of the Negev, Israel\\}
}\maketitle
\begin{abstract}
In this paper, we address the problem of Private Information Retrieval (PIR) over a public Additive White Gaussian Noise (AWGN) channel.
In such a setup, the server's responses are visible to other servers.
Thus, a curious server can listen to the other responses, compromising the user's privacy. 
Indeed, previous works on PIR over a shared medium assumed the servers cannot instantaneously listen to other responses.
To address this gap, we present a novel randomized lattice -- PIR coding scheme that jointly codes for privacy, channel noise, and curious servers which may listen to other responses.
We demonstrate that a positive PIR rate is achievable even in cases where the channel to the curious server is stronger than the channel to the user.
\end{abstract}
\begin{IEEEkeywords}
Private Information Retrieval, Wiretap, Lattice Codes, MISO.
\end{IEEEkeywords}
\section{Introduction}
In the classical PIR problem, there are $N$ identical, non-communicating servers, each storing the same $M$ messages, and a user who wishes to privately retrieve the $i$-th message without revealing the index $i$ to the servers. 
The primary objective is to minimize the communication overhead between the user and the servers while adhering to privacy constraints.

The PIR problem has recently attracted the attention of information theory society, motivated by the challenge of characterizing the fundamental limits of this problem.
The classical PIR capacity, with noiseless and orthogonal channels, is given by ${C_{PIR}=(1-1/N)(1-(1/N)^M)}$ in 
\cite{sun2017capacity}.
This key finding encourages the information theory community to explore further variations of this issue, tailored to the specific challenges arising naturally in real life.
Many extensions and generalizations of the problem have been examined.
For instance, robust PIR considers scenarios where some databases may fail to respond, and $T$-private PIR ensures privacy even when any $T$ out of the $N$ databases collude, as studied in \cite{sun2017capacityColluding}.
Another variation is PIR from byzantine databases, which accounts for the possibility of up to $B$ databases providing erroneous responses, either intentionally or unintentionally \cite{banawan2018capacity}.

While the PIR problem has been extensively studied over the past decade, most research has focused on simple communication channels characterized by \emph{orthogonal and noiseless links} between the user and the database. 
The case of PIR over noisy orthogonal channels (NPIR) was examined in \cite{banawan2019noisy}, which demonstrated that the channel coding required to mitigate channel errors is "almost separable" from the retrieval scheme, depending only on the agreed traffic ratio.
Additionally, \cite{banawan2019noisy} explored PIR over various types of MAC and showed that, unlike NPIR, the channel coding and retrieval
scheme is not optimal in general.
Building on this foundation, the authors in \cite{shmuel2021private,orelimcorrection24,orelimEfficientISIT} investigated PIR over a block-fading Gaussian MAC.

In these papers, the authors did not refer to the fact that the channel model, which represents a practical wireless channel, is publicly available, hence making it possible for a curious server to overhear other responses, which may cause a privacy leakage.
This paper aims to maintain privacy under the assumption that any of the servers can listen to the channel.
The main goal is to prevent the servers from gaining any information about the identity of the user's desired message from their observed signal. 

In \cite{wang2018capacity}, the author extends the classical PIR problem to address scenarios involving colluding servers and eavesdroppers. 
This extended problem introduces additional privacy and secrecy constraints to the conventional PIR setting.
Specifically, the user aims to retrieve a message without revealing the desired message index to any set of $T$ colluding, and an eavesdropper who can listen to the queries and answers of any $E$ servers but is prevented from learning
any information about the messages.
This related problem indeed adds a secrecy constraint, yet it assumes noiseless and orthogonal channels.
Our model assumes a Gaussian MAC channel between the servers and the user, which presents a significant difference.

To address these challenges, we propose a joint PIR-channel coding scheme that exploits the additive nature of the AWGN MAC while engineering the responses such that no server can infer the identity of the desired message.
Simultaneously, the responses, which "sum up over the air," enable the user to decode the desired message.
Additionally, the integration of wiretap coset coding ensures that any curious server is unable to decode the identity from the overheard responses, thereby preserving privacy.

Section \ref{sec:sys} outlines the system model and provides the necessary preliminaries.
Section \ref{sec:mainRes} presents the achievable PIR rates for different variations of the proposed model.
Section \ref{sec:Scheme} details the PIR achievable schemes and provides proofs for the Theorems introduced in Section \ref{sec:mainRes}.




\section{System Model Preliminaries}\label{sec:sys}

\subsection{Notational Conventions}
Throughout the paper, we use boldface lowercase to refer to vectors, e.g., $\b{h} \in \R^L$.  For a vector $\b{h}$, we write $\n{h}$ for its Euclidean norm, i.e. $\n{h} \triangleq \sqrt{\sum_{i}h_i^2}$. 
All logarithms in this paper are natural logarithms, and rates are in nats.

\subsection{System Model and Problem Statement}\label{sec-system model}

Consider the PIR problem over a public Gaussian wiretap channel with $N$ identical and non-communicating servers (transmitters), each equipped with one transmit antenna.
Each server stores a set of messages $W_1^M=\{W_1,W_2,...,W_M\}$ of size $L$ each. The messages were drawn uniformly and independently from $\mathbb{F}_p^L$ where $p$ is assumed to be prime
, i.e., in nats units, we have,
\begin{equation}\label{equ-Entropy of messages}
\begin{aligned}
&H(W_l)=L\log{p} \ \text{for} \  l=1,...,M,\\
&H(W_1^M)=ML\log{p}.
\end{aligned}
\end{equation}


As usual, the user wishes to privately retrieve the message $W_i$, where the index $i$ is assumed to be uniformly distributed on $[1,...,M]$, i.e., $i$ is a realization of $\theta \sim U[1,..M]$, while keeping $\theta$ secret from each server.
However, unlike traditional PIR setups, herein, since the replies are over a wireless channel, it is important to keep $\theta$ private, also considering the public replies.
Let $Q_j(\theta)$ denote a query that depends on the required random message index and is generated for the $j$th server.
$Q_j(i)$ will be used to show the query's dependence on the specific realization of the index, $i$.
Accordingly, the user generates a set of $N$ queries $Q_1(i),Q_2(i),...,Q_N(i)$, one for each server, which are statistically independent with the messages (as those are not known to him). 
That is, we have
\begin{equation}\label{eq:const_querry}
I(W_1^M;Q_1(\theta),...,Q_N(\theta))=0.
\end{equation}
\textcolor{black}{
We assume the servers receive the queries through independent control channels and cannot access each other’s queries or answers}\footnote{This assumption is a common assumption in the literature and is justified by the fact that queries are low-rate messages, allowing encryption to be implemented without incurring significant overhead.}

The $k$th server generates an answer based on the received query $Q_k(\theta)$ and the stored messages $W_1^M$.
The answer, denoted by $\b{x}_k(i)$, is of size $n$, where $n$ is fixed for all responses,\footnote{
This is because we follow the usual Gaussian MAC setup in the literature \cite{cover2012elements}, where a codeword is transmitted during $n$ channel uses.}.
The answer is a deterministic function of the messages and the query. 
We have
\begin{equation}\label{eq:const_deter}
H(\b{x}_k(\theta) | W_1^M,Q_k(\theta))=0.
\end{equation}

Because the server's responses are transmitted over a public channel, a byzantine server could monitor the communication within the channel, potentially compromising user privacy (Figure \ref{fig:PIR_Wiretap_simpleModel}).
To ensure privacy, each server should not be able to infer any information about the desired message index given the queries, the entire database, and the received signal $\mathbf{w}_j$ at the $j$-th server antenna.
Thus, for each server $j$ the following \textit{privacy} constraint must be satisfied,

\begin{equation}\label{equ-secrecy-privacy}
\begin{split}   
\hspace{-0.5cm}\text{[Privacy - under public responses]}&\\
&\hspace{-4.5cm}\lim_{n\rightarrow\infty}I(\theta;Q_j(\theta),W_1^M,\mathbf{w}_j)=0, \hspace{0.2cm}\forall j\in\{1,...,N\}.
\end{split}
\end{equation}
\noindent
Note that the privacy requirement \eqref{equ-secrecy-privacy}, as defined above, is a special case arising from all servers sharing the same communication channel.
The main difference lies in the curious server's knowledge of
$\mathbf{w}_j$, which represents the received signal at its antenna.
Additionally, it is important to highlight that our privacy requirement is asymptotic, as it relies on the strong secrecy notion to ensure no information leakage from $\mathbf{w}_j$.
    
The databases are linked to the user via a block-memoryless fading AWGN channel (Figure \ref{fig:PIR_Wiretap_simpleModel}).
In this setup, the channel remains constant throughout the transmission of codewords of size $n$, and each block is independent of the others.
Thus, over a transmission of $n$ symbols, the user and the $j$-th server observe a noisy linear combination of the transmitted signals,
\begin{equation}\label{eq:MAC_output_model}
\mathbf{y}=\sum_{k=1}^{N}h_k\mathbf{x}_k+\mathbf{n}_y, \quad \mathbf{w}_j=\sum_{k=1}^{N}g_{j,k}\mathbf{x}_k+\mathbf{n}_{j,w}
\end{equation}
Here, $h_k\thicksim\mathcal{N}(0,1)$ and $g_{j,k}\thicksim\mathcal{N}(0,1)$ represent the real channel coefficients of the servers-user and servers-server channel, respectively, and $\mathbf{n}_y\thicksim\mathcal{N}(0,\sigma_y^2\mathbf{I}^{n\times n})$, $\mathbf{n}_{j,w}\thicksim\mathcal{N}(0,\sigma_w^2\mathbf{I}^{n\times n})$ are i.i.d. Gaussian noise observed by the user and the curious server, respectively.

Let us refine certain aspects of our model assumptions.
We assume that explicit cooperation among servers is not permitted.
Additionally, we assume a per-database power constraint, where all transmitting databases operate with a fixed power $P$, and power cannot be allocated differently to different servers.
Thus, the transmitted codebook $\cC$ must satisfy the average power constraint, i.e., $\EX\left[\|\b{x}_k\|^2\right] \leq nP$. 

We assume there is full channel state information at the transmitter (CSIT) and together with statistical information of the curious server, i.e., $\{h_k\}_{k=1}^N$,$\{g_{j,k}\}_{k=1}^N$ for any $j$, and $\sigma_w^2$ are known.
\textcolor{black}{
We also assume reciprocity between the channels, meaning that the forward and reverse channels between the servers and the user exhibit the same properties and the same channel coefficients.}
\textcolor{black}{Under the CSIT assumption, the curious server can either cancel its own transmitted message or remain idle while passively listening.
Consequently, if the index of the malicious server is $j$, it obtains 
\begin{equation}\label{eq:CSIT_malicuse}
    \mathbf{z_j}=\mathbf{w}_j-g_{j,j}\mathbf{x}_j=\sum_{k=1, k\neq j}^{N}g_{j,k}\mathbf{x}_k+\mathbf{n}_{j,w}.
\end{equation}}

Upon receiving the mixed response $\mathbf{y}$ from all the databases, the user decodes the required message $W_i$. 
Let $\widehat{W_i}$ denote the decoded message at the user and define the error probability for decoding a message as follows,
\begin{equation}\label{equ-probability of error definition}
P_e(L)\triangleq P_r(\widehat{W_i} \neq W_i).
\end{equation}
We require that $P_e(L)\rightarrow 0$ as $L$ tends to infinity.

\begin{figure}
    \centering
    \includegraphics[width=0.7\linewidth]{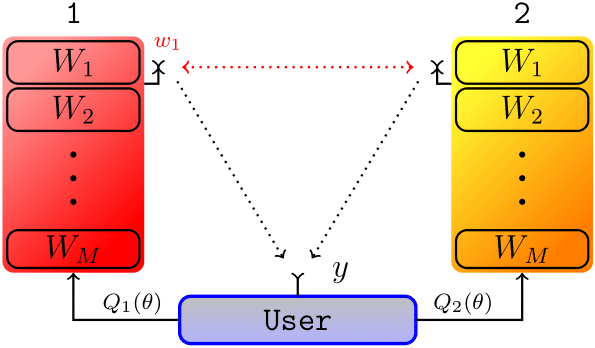}
    \caption{System model with two servers connected to a user via an AWGN MAC channel. 
    Server $1$ acts as a malicious entity, monitoring the communication and having access to $w_1$. }
    \label{fig:PIR_Wiretap_simpleModel}
\end{figure}

\subsection{Performance Metric}
For a joint PIR-coding scheme over MAC, we define the PIR rate as the total number of desired bits divided by the total number of channel uses \cite{shmuel2021private,orelimEfficientISIT}.
Specifically, 
\begin{definition} Denote:
 $$R_{PIR}^{MAC}(n)\triangleq \frac{H(W_i)}{n}=\frac{L\log{p}}{n},$$ 
 where $n$ represents the number of channel uses.
 The PIR capacity over AWGN MAC, denoted by $C_{PIR}^{MAC}$, is the supremum of $R_{PIR}^{MAC}(n)$, under which reliable communication is achievable, ensuring the privacy of the user's will. 
 That is, satisfying (\ref{eq:const_querry},\ref{eq:const_deter},\ref{equ-secrecy-privacy}) and (\ref{equ-probability of error definition}). 
\end{definition}

\textcolor{black}{
Neglecting the privacy constraint, our model matches with Gaussian MAC with per-antenna power constraint model \cite{cover2012elements}, as the servers cannot explicitly cooperate but may implicitly do so through the queries together with the assumption of full CSIT.
An upper bound for this model can be established by having all antennas transmit the same symbol simultaneously, effectively reducing the system to an AWGN channel with a single antenna and a total power equal to the sum of the individual powers. Consequently, this setup serves as an upper bound for the PIR capacity rate in our model.
\begin{equation}\label{upper bound}
C_{AWGN}^{MAC}=\frac{1}{2}\log\left( 1+ \frac{P_{total}}{\sigma^2}\right).
\end{equation}
}

The following subsection introduces some preliminaries on the lattice, which we will need to describe and analyze our scheme. 

\subsection{Lattice Code and Lattice Gaussian Distribution}
It has been shown in \cite{erez2004achieving,ling2014achieving} that the full capacity of the point-to-point AWGN channel is achievable using lattice encoding and decoding. 
We now provide a brief background on lattice codes, which will be useful in the remainder of this paper.

An $n$-dimensional lattice $\Lambda$ is a discrete subgroup of the Euclidean space $\mathbb{R}^n$, which is closed under reflection and real addition. Recall the definition of an $n$-dimensional lattice \cite{zamir2014lattice}.
\begin{definition}\label{Lattice}
A non-degenerate $n$-dimensional lattice  is defined by a set of $n$ linearly independent basis (column) vectors $\mathbf{g}_1,..., \mathbf{g}_n$ in $\mathbb{R}^n$.
The lattice  is composed of all integral combinations of the basis vectors, i.e.,
\begin{equation*}
    \Lambda=\left\{\lambda=\sum^{n}_{k=1} i_k\mathbf{g}_k \text{ : } i_k \in \mathbb{Z} \right\}
    =\left\{\lambda=G\cdot\mathbf{i} \text{ : } \mathbf{i} \in \mathbb{Z}^n \right\}
\end{equation*}
where $\mathbb{Z} = \{0, \pm 1, \pm 2,...\}$ is the set of integers, $\mathbf{i} = (i_1,...,i_n)^t$
is an $n$-dimensional integer (column) vector, and the $n \times n$ generator matrix $G$ is given
by
$$G=\left[\ \mathbf{g}_1\ |\ \mathbf{g}_2\ |\ ...\ |\ \mathbf{g}_n\ \right]$$
The resulting lattice is denoted $\Lambda(G)$.
\end{definition}

\begin{definition}[Quantizer]
A lattice quantizer is a map, ${Q_{\Lambda}: \mathbb{R}^n \rightarrow  \Lambda}$, that sends
 a point, $\mathbf{s}$, to the nearest lattice point in Euclidean distance. That is,
 \begin{equation}
     Q_{\Lambda}(\mathbf{s})=\text{argmin}_{\lambda\in\Lambda}\norm{\mathbf{s-\lambda}}.
 \end{equation}
\end{definition}
\begin{definition}[Voronoi Region]
The \textit{fundamental Voronoi region}, $\mathcal{V}$, of a lattice, $\Lambda$, is the set of all points in $\mathbb{R}^n$ that are closest to the zero vector compared to any other lattice point. That is, 
${\mathcal{V}=\{\mathbf{s}: Q_{\Lambda}(\mathbf{s})=0\}}$.
\end{definition}
\begin{definition}[Modulus] Let $[\mathbf{s}] \text{mod} \ \Lambda$ denote the quantization error of $\mathbf{s}\in\mathbb{R}^n$ with respect to the lattice $\Lambda$. That is,
\begin{equation}
[\mathbf{s}] \text{mod} \ \Lambda =\mathbf{s}-Q_{\Lambda}(\mathbf{s}).
\end{equation}   

For all $\mathbf{s,t} \in \mathbb{R}^n$ and $\Lambda_c \subseteq \Lambda_f$, the $\text{mod} \ \Lambda$ operation satisfies:
\begin{equation}\label{eq:dist}
[\mathbf{s} + \mathbf{t}] \text{mod} \ \Lambda =
\big[[\mathbf{s}] \text{mod} \ \Lambda +\mathbf{t}\big] \text{mod} \ \Lambda 
\end{equation}

\begin{equation}\label{eq:QuantizationMod}
\left[Q_{\Lambda_f}(\mathbf{s})\right]\ \text{mod} \ \Lambda_c =
\left[Q_{\Lambda_f}([\mathbf{s}]\ \text{mod} \ \Lambda_c)\right]\ \text{mod} \ \Lambda_c 
\end{equation}

\begin{equation}\label{eq:Zscal}
[a\mathbf{s}] \text{mod} \ \Lambda =
[a[\mathbf{s}] \text{mod} \ \Lambda]\text{mod} \ \Lambda \quad \forall a\in \mathbb{Z} 
\end{equation}

\begin{equation}\label{eq:scal}
\beta [\mathbf{s}] \text{mod} \ \Lambda =
[\beta \mathbf{s}] \text{mod} \ \beta \Lambda \quad \forall \beta\in \mathbb{R} 
\end{equation}

\end{definition}

Next, we recall the definitions from \cite{ling2014semantically} for Lattice Gaussian distributions, which will be needed to construct the PIR scheme.
For $\sigma>0$ and $\mathbf{c} \in \mathbb{R}^{n}$, we define the Gaussian distribution of variance $\sigma^{2}$ centered at $\mathbf{c} \in \mathbb{R}^{n}$ as

$$
f_{\sigma, \mathbf{c}}(\mathbf{x})=\frac{1}{(\sqrt{2 \pi} \sigma)^{n}} e^{-\frac{\|\mathbf{x}-\mathbf{c}\|^{2}}{2 \sigma^{2}}}
$$
for all $\mathbf{x} \in \mathbb{R}^{n}$. 
\begin{definition}[$\Lambda$-Periodic Function]
\begin{equation*}
f_{\sigma, \Lambda}(\mathbf{x})=\sum_{\boldsymbol{\lambda} \in \Lambda} f_{\sigma, \boldsymbol{\lambda}}(\mathbf{x})=\frac{1}{(\sqrt{2 \pi} \sigma)^{n}} \sum_{\boldsymbol{\lambda} \in \Lambda} e^{-\frac{\|\mathbf{x}-\boldsymbol{\lambda}\|^{2}}{2 \sigma^{2}}} 
\end{equation*}
for all $\mathbf{x} \in \mathbb{R}^{n}$. 
Note that $f_{\sigma, \Lambda}$ restricted to the quotient $\mathbb{R}^{n} / \Lambda$ \footnote{the set of all cosets which in our case equivalent to the Voronoi region} is a probability density.
\end{definition}
\begin{definition}[Discrete Gaussian]
discrete Gaussian distribution over $\Lambda$ centered at $\mathbf{c} \in \mathbb{R}^{n}$ and taking values over $\boldsymbol{\lambda} \in \Lambda$ defined as:

$$
D_{\Lambda, \sigma, \mathbf{c}}(\boldsymbol{\lambda})=\frac{f_{\sigma, \mathbf{c}}(\boldsymbol{\lambda})}{f_{\sigma, \Lambda}(\mathbf{c})}, \quad \forall \boldsymbol{\lambda} \in \Lambda
$$
\end{definition} 

Lastly, we define the discrete Gaussian distribution over a coset of $\Lambda$, i.e., the shifted lattice $\Lambda-\mathbf{c}$ :

$$
D_{\Lambda-\mathbf{c}, \sigma}(\boldsymbol{\lambda}-\mathbf{c})=\frac{f_{\sigma}(\boldsymbol{\lambda}-\mathbf{c})}{f_{\sigma, \Lambda}(\mathbf{c})} \quad \forall \boldsymbol{\lambda} \in \Lambda .
$$

\section{Main results}\label{sec:mainRes}
The following theorem presents an achievable PIR rate for the AWGN channel with public responses and without fading, i.e., all the fading coefficients equal $1$. 
The first result is for $N=2$ servers.
\begin{theorem}\label{th:the PIR rate for 2S public AWGN MISO channel}
For the $N=2$ databases public AWGN channel, the following PIR rate is achievable,
\begin{equation}\label{eq- retrieval rate for non-fading MAC N=2}
R=\frac{1}{2} \log \left(\min \left\{\frac{\frac{1}{2}+SNR_y}{1+SNR_w}, SNR_{y}\frac{\frac{1}{2}+ SNR_y}{1+ SNR_y}\right\}\right)-\frac{1}{2}
\end{equation}
where $SNR_y=\frac{P}{\sigma^2_y}$, and $SNR_w=\frac{P}{\sigma^2_w}$.
\end{theorem}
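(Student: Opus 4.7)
The plan is to build a nested-lattice scheme $\Lambda_c \subseteq \Lambda_p \subseteq \Lambda_f$, where $\Lambda_c$ is a shaping lattice with second moment $P$, $\Lambda_p$ is an intermediate lattice used for wiretap randomization, and $\Lambda_f$ is the fine codebook. I would embed the classical $N=2$ PIR protocol into this chain: the user picks a uniform random subset $S\subseteq[M]$ and sends $Q_1=S$, $Q_2=S\oplus\{i\}$ through the private control channel; server $j$ maps the $\F_p$-sum $\sum_{k\in Q_j}W_k$ to a coset representative $\bm{\lambda}_j\in\Lambda_f/\Lambda_c$ and transmits $\mathbf{x}_j=[\bm{\lambda}_j+\mathbf{d}_j]\bmod\Lambda_c$, where the dither $\mathbf{d}_j$ is drawn from the discrete Gaussian $D_{\Lambda_p-\bm{\lambda}_j,\sigma_d}$ and randomly selects a coset of $\Lambda_c$ inside $\Lambda_p$. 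Since the two $\F_p$-sums differ by exactly $W_i$, summing the two codewords modulo $\Lambda_c$ at the user yields the desired message, in the compute-and-forward style.

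For reliability at the user I would apply the nested-lattice achievability bound for Gaussian two-user computation: after MMSE scaling of $\mathbf{y}=\mathbf{x}_1+\mathbf{x}_2+\mathbf{n}_y$ and reduction modulo $\Lambda_c$, the integer combination is decodable with vanishing error probability whenever the per-server rate stays below $\tfrac{1}{2}\log\!\bigl(\tfrac{1}{2}+\SNR_y\bigr)$, the factor $\tfrac{1}{2}$ reflecting that two codewords share the same coarse lattice. The net message rate is obtained by subtracting the coset-index rate $\tfrac{1}{n}\log|\Lambda_p/\Lambda_c|$ devoted to the privacy dither.

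For privacy I would use the semantic-security / flatness-factor framework of~\cite{ling2014semantically}. After the curious server $j$ cancels its own contribution via CSIT as in~\eqref{eq:CSIT_malicuse}, it observes $\mathbf{z}_j=g_{j,3-j}\mathbf{x}_{3-j}+\mathbf{n}_{j,w}$; since it also knows $W_1^M$ and $Q_j$, only $\mathbf{x}_{3-j}$ carries information about $\theta$, and that information lives in the coset label $\bm{\lambda}_{3-j}\bmod\Lambda_p$. Choosing $\sigma_d$ and the density of $\Lambda_p$ so that the smoothing parameter $\eta_\epsilon(\Lambda_p)$ is below $\sqrt{P+\sigma_w^2}$, the conditional density of $\mathbf{z}_j$ given every coset label becomes $\epsilon$-close in variational distance to a continuous Gaussian of variance $g_{j,3-j}^2 P+\sigma_w^2$, which uniformly bounds $I(\theta;\mathbf{z}_j,Q_j,W_1^M)$. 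This pays a privacy rate of $\tfrac{1}{2}\log(1+\SNR_w)$ and, after the usual $-\tfrac{1}{2}$ discrete-to-continuous Gaussian shaping gap, produces the first branch $\tfrac{1}{2}\log\!\bigl((\tfrac{1}{2}+\SNR_y)/(1+\SNR_w)\bigr)-\tfrac{1}{2}$ of the minimum in~\eqref{eq- retrieval rate for non-fading MAC N=2}.

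The second branch $\SNR_y(\tfrac{1}{2}+\SNR_y)/(1+\SNR_y)$ comes from the regime in which the user's own MMSE residual, rather than the secrecy gap to the curious server, is the bottleneck, and it follows from the same compute-and-forward decoder together with an explicit bound on the MMSE effective noise at $\SNR_y$. The hard part, and the main obstacle I anticipate, is constructing a single nested pair $(\Lambda_p,\Lambda_f)$ that is simultaneously AWGN-good (so Erez--Zamir style decoding succeeds at the user) \emph{and} flatness-factor small (so semantic security holds at the curious server) on the same block length $n$, while respecting the per-database power constraint $\EX\sqn{x_k}\le nP$. I would settle this via the probabilistic existence of nested lattice chains that are simultaneously Poltyrev-good and secrecy-good as guaranteed in~\cite{ling2014semantically}, and then take the minimum of the two achievable rates after optimizing $\sigma_d$.
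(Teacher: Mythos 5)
Your proposal is essentially the paper's own proof: the classical two-server query split ($Q_1=\mathbf{b}$, $Q_2=-\mathbf{b}\mp\mathbf{e}_i$ so that the answers sum to $\pm W_i$), the semantically secure wiretap lattice code of \cite{ling2014semantically} in which each server samples its codeword from a discrete Gaussian over the message coset of the secrecy lattice, MMSE-scaled modulo-lattice decoding of the superimposed answers at the user (whose residual noise $\tfrac{2P\sigma_y^2}{2P+\sigma_y^2}$ produces the $\tfrac{1}{2}+\SNR_y$ numerator), secrecy-goodness against the self-cancelling curious server (producing the $1+\SNR_w$ denominator), and the second branch coming from the power-constraint/Gaussianity side conditions on $\Lambda_w$ combined with AWGN-goodness of $\Lambda_f$. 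Two small deviations are worth flagging: the paper uses only a two-level chain $\Lambda_w\subset\Lambda_f$ with \emph{no} transmit-side modulo --- the discrete Gaussian $D_{\Lambda_w+\boldsymbol{\lambda}_k,\sqrt{P}}$ itself enforces the power constraint, so $\mathbf{x}_1+\mathbf{x}_2$ lies exactly in $\Lambda_w+\boldsymbol{\lambda}_1+\boldsymbol{\lambda}_2$ and the randomization vanishes under $\bmod\,\Lambda_w$, whereas your extra $[\cdot]\bmod\Lambda_c$ wraps the discrete Gaussian and would force you to redo both the self-noise accounting and the flatness-factor bound; and your smoothing condition $\eta_\epsilon(\Lambda_p)\le\sqrt{P+\sigma_w^2}$ is too weak as stated --- secrecy-goodness must hold with respect to the combined parameter $\sqrt{P}\sigma_w/\sqrt{P+\sigma_w^2}$, which is precisely what yields the $1+\SNR_w$ term in the first branch.
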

The following Theorem extends the above result for $N>2$,
\begin{theorem}\label{th:the PIR rate for N>2 public AWGN MISO channel}
Consider the PIR problem with $N\geq2$ databases over a public AWGN channel. 
Then, for any non-empty subsets of databases $\mathcal{S}_1, \mathcal{S}_2$ satisfying ${\mathcal{S}_1\cap \mathcal{S}_2=\emptyset}$,
$\mathcal{S}_1 \cup \mathcal{S}_2 \subseteq\{1, ..., N\}$, the following PIR rate is achievable,
\begin{equation}\label{eq- retrieval rate for non-fading MAC N>2}
    R=\frac{1}{2} \log \left(\min \left\{\frac{\frac{1}{2}+SNR_y'}{1+SNR_w'}, SNR_{y}'\frac{\frac{1}{2}+ SNR_y'}{1+ SNR_y'}\right\}\right)-\frac{1}{2}
\end{equation}
where $SNR_y'=\floor{\frac{N}{2}}^{2}\frac{P}{\sigma^2_y}$, $SNR_w'=\floor{\frac{N}{2}}^{2}\frac{P}{\sigma^2_w}$.
\end{theorem}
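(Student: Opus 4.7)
The plan is to reduce the $N$-server problem directly to the two-server instance of Theorem~\ref{th:the PIR rate for 2S public AWGN MISO channel}, by treating each subset $\mathcal{S}_j$ as a ``virtual transmitter'' that coherently combines $|\mathcal{S}_j|$ unit-gain channels into an amplitude-$|\mathcal{S}_j|$ link.

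First, I would run the $N=2$ scheme underlying Theorem~\ref{th:the PIR rate for 2S public AWGN MISO channel} to produce two length-$n$ codewords $\mathbf{x}_{\mathcal{S}_1},\mathbf{x}_{\mathcal{S}_2}$ of average power $P$ each. Every server $k\in\mathcal{S}_j$ would be instructed, via its query $Q_k(\theta)$, to transmit the single codeword $\mathbf{x}_{\mathcal{S}_j}$; any server outside $\mathcal{S}_1\cup\mathcal{S}_2$ receives a dummy query independent of $\theta$ and stays silent. Since every fading coefficient equals $1$, the user then observes
\begin{equation*}
\mathbf{y}= |\mathcal{S}_1|\,\mathbf{x}_{\mathcal{S}_1} + |\mathcal{S}_2|\,\mathbf{x}_{\mathcal{S}_2} + \mathbf{n}_y,
\end{equation*}
which is precisely the two-virtual-server MAC of Theorem~\ref{th:the PIR rate for 2S public AWGN MISO channel} with $SNR_y$ replaced by $|\mathcal{S}_j|^2 P/\sigma_y^2$. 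Setting $|\mathcal{S}_1|=|\mathcal{S}_2|=\floor{N/2}$ yields the announced $SNR_y'$, and the lattice decoding analysis of Theorem~\ref{th:the PIR rate for 2S public AWGN MISO channel} applies verbatim.

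For the privacy step I would split into two cases for the curious server $j$. If $j\in\mathcal{S}_1$, after self-cancellation in~(\ref{eq:CSIT_malicuse}) and a further cancellation of $\mathbf{x}_{\mathcal{S}_1}$---which $j$ knows deterministically from $(Q_j(\theta),W_1^M)$---the remaining observation is $|\mathcal{S}_2|\,\mathbf{x}_{\mathcal{S}_2}+\mathbf{n}_{j,w}$, i.e., the same other-group view already analyzed in Theorem~\ref{th:the PIR rate for 2S public AWGN MISO channel}, but now at SNR $\floor{N/2}^2 P/\sigma_w^2 = SNR_w'$. The wiretap-lattice leakage bound of Theorem~\ref{th:the PIR rate for 2S public AWGN MISO channel} then directly delivers~(\ref{equ-secrecy-privacy}). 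If instead $j\notin\mathcal{S}_1\cup\mathcal{S}_2$, its query is $\theta$-independent and it merely overhears the noisy sum $\floor{N/2}(\mathbf{x}_{\mathcal{S}_1}+\mathbf{x}_{\mathcal{S}_2})+\mathbf{n}_{j,w}$, which is strictly weaker than the in-group case since the eavesdropper no longer even knows one of the two codewords.

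The hard part will be checking that the substitution $\sigma_w^2\mapsto\sigma_w^2/\floor{N/2}^2$ preserves the flatness-factor and smoothing-parameter hypotheses underpinning the wiretap lattice construction of Theorem~\ref{th:the PIR rate for 2S public AWGN MISO channel}. Since those hypotheses are expressed purely in terms of the effective eavesdropper noise variance, the leakage bound rescales transparently, and the only remaining work is to confirm that after substituting $SNR_y\mapsto SNR_y'$ and $SNR_w\mapsto SNR_w'$, the resulting rate matches~(\ref{eq- retrieval rate for non-fading MAC N>2}).
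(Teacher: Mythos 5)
Your proposal is correct and follows essentially the same route as the paper: split the servers into two groups of size $\floor{N/2}$, have each group coherently transmit a common codeword so the problem reduces to the two-server instance with per-group power $\floor{N/2}^2 P$, and let the curious server cancel its own group's contribution before applying the wiretap-lattice leakage bound. The only point the paper makes explicit that you gloss over is how all servers in a group emit the \emph{identical} random lattice point $\mathbf{x}_{\mathcal{S}_j}=\boldsymbol{\lambda}_j+\mathbf{r}_j$: the user must supply each group with a common random seed alongside the query so that the dither $\mathbf{r}_j$ is generated consistently, which your phrase ``instructed, via its query, to transmit the single codeword'' implicitly assumes.
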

Accordingly, the achievable rate scaling laws with the number of servers $N$ and the power $P$ are the same as those for the AWGN wiretap channel.
The next theorem considers the PIR over public AWGN block--fading channel,
\begin{theorem}\label{th:the PIR rate for block fading public AWGN MISO channel}
Consider the PIR problem with $N\geq2$ databases over a public block-fading AWGN. 
Then, for any non-empty subsets of databases $\mathcal{S}_1, \mathcal{S}_2$ satisfying ${\mathcal{S}_1\cap \mathcal{S}_2=\emptyset}$,
$\mathcal{S}_1 \cup \mathcal{S}_2 \subseteq\{1, ..., N\}$, the following PIR rate is achievable,
\begin{equation}\label{eq- retrieval rate for block-fading MAC}
R=\frac{1}{2} \log \left(\min_i \left\{\frac{\frac{1}{2}+SNR_{\tilde{y}}}{1+SNR_{\tilde{w}_i}}, SNR_{\tilde{y}}\frac{\frac{1}{2}+ SNR_{\tilde{y}}}{1+ SNR_{\tilde{y}}}\right\}\right)-\frac{1}{2}
\end{equation}
where  $\Tilde{h}_m = \sum_{k\in\mathcal{S}_m} |h_k|$, $\Tilde{g}_m = \sum_{k\in\mathcal{S}_m} g_k$, $SNR_{\tilde{y}} = \frac{\tilde{h}_1^2 P}{\sigma_{y}^2}$, 
$SNR_{\tilde{w}_1} = \frac{\tilde{g}_1^2P}{\sigma_{w}^2} $, and 
$SNR_{\tilde{w}_2} = \frac{\left(\tilde{g}_2\frac{\tilde{h}_1}{\tilde{h}_2}\right)^2 P}{\sigma_{w}^2} $.
\end{theorem}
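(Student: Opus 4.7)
The plan is to reduce the block-fading problem to the effective two-transmitter non-fading setting already solved in Theorems \ref{th:the PIR rate for 2S public AWGN MISO channel} and \ref{th:the PIR rate for N>2 public AWGN MISO channel}, via a channel-dependent precoder applied independently by each server. Partition the servers into the two disjoint groups $\mathcal{S}_1,\mathcal{S}_2$, and for each $k\in\mathcal{S}_m$ let server $k$ transmit $\mathbf{x}_k=\mathrm{sgn}(h_k)\,\mathbf{s}^{(m)}$, where $\mathbf{s}^{(m)}$ is the common per-group codeword generated by the underlying lattice--wiretap retrieval scheme. Summing via \eqref{eq:MAC_output_model}, the user observes
\begin{equation*}
\mathbf{y}=\tilde{h}_1\mathbf{s}^{(1)}+\tilde{h}_2\mathbf{s}^{(2)}+\mathbf{n}_y,
\end{equation*}
so the two groups act as two virtual transmitters with effective gains $\tilde{h}_m=\sum_{k\in\mathcal{S}_m}|h_k|$, matching $SNR_{\tilde{y}}$.

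For any curious server $j$, after the self-cancellation step \eqref{eq:CSIT_malicuse}, the residual observation $\mathbf{z}_j$ becomes a noisy linear combination of $\mathbf{s}^{(1)}$ and $\mathbf{s}^{(2)}$ whose coefficients, once the sign precoder is absorbed into the cross-channel, reduce to the $\tilde{g}_m$ used in the theorem (with the index $j$ suppressed). We are thus reduced to an equivalent two-transmitter public-MAC wiretap channel with user SNR $SNR_{\tilde{y}}$ and eavesdropper SNR either $SNR_{\tilde{w}_1}$ or $SNR_{\tilde{w}_2}$, to which I would apply verbatim the lattice--Gaussian wiretap coding from the proof of Theorem \ref{th:the PIR rate for N>2 public AWGN MISO channel}: the message-bearing group carries a coset representative of $W_i$, while the other group transmits an independent discrete-Gaussian dither that (a) cancels at the user after MMSE-scaled lattice decoding and (b) makes the conditional law of $\mathbf{z}_j$ variationally close to its unconditional Gaussian, so that \eqref{equ-secrecy-privacy} holds. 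The $\tilde{h}_1/\tilde{h}_2$ rescaling inside $SNR_{\tilde{w}_2}$ arises precisely when $j\in\mathcal{S}_2$ and the dither codeword must be amplitude-matched to $\mathbf{s}^{(1)}$ at the user to preserve cancellation; the minimum over $i$ then takes the worst of the two possible group assignments of $j$. The reliability term $SNR_{\tilde{y}}(\tfrac{1}{2}+SNR_{\tilde{y}})/(1+SNR_{\tilde{y}})$ then follows from the standard Erez--Zamir MMSE-scaled lattice decoding argument at the user.

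The main obstacle I anticipate is respecting the per-antenna power constraint after the $\tilde{h}_1/\tilde{h}_2$ rescaling: on fading realizations where $\tilde{h}_2\ll\tilde{h}_1$ the rescaled group-$\mathcal{S}_2$ codeword would exceed power $P$, so I would truncate the codebook on this vanishing-probability outage event and absorb the rate penalty into the $o(1)$ remainder, a manipulation standard for block-fading wiretap models. A secondary point is verifying that the precoder $\mathrm{sgn}(h_k)$, being a deterministic function of public CSI and independent of $\theta$, introduces no extra leakage in \eqref{equ-secrecy-privacy} beyond what is already controlled in the non-fading proof; this takes one line of conditioning but no new idea.
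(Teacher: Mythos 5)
Your proposal follows essentially the same route as the paper: partition the servers into $\mathcal{S}_1,\mathcal{S}_2$, apply a sign precoder $\mathrm{sgn}(h_k)$ so the per-group gains become $\tilde{h}_m=\sum_{k\in\mathcal{S}_m}|h_k|$, down-scale one group by $\tilde{h}_1/\tilde{h}_2$ to align the two virtual transmitters at the user, decode with MMSE-scaled lattice decoding, and invoke secrecy-goodness of $\Lambda_w$ against the two possible residual observations of a self-cancelling curious server, taking the minimum over the two cases. This is the paper's proof.

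Two corrections, neither fatal since you defer to the Theorem~\ref{th:the PIR rate for N>2 public AWGN MISO channel} construction for the details. First, your description of one group as ``message-bearing'' and the other as ``an independent discrete-Gaussian dither'' misrepresents the scheme: both groups encode coset representatives $\boldsymbol{\lambda}_g=\boldsymbol{\phi}(a_g^{(1)})$ of their query-determined answers, each randomized by a discrete Gaussian over $\Lambda_w+\boldsymbol{\lambda}_g$; neither group alone carries $W_i$, and it is the sum $a_1^{(1)}+a_2^{(1)}=\pm W_i^{(1)}$ that the user recovers after the mod-$\Lambda_w$ operation removes both groups' randomizations. A pure-dither group would not reproduce the privacy argument, which bounds the leakage by $I(\lambda_1;\mathbf{z}_1)+I(\lambda_2;\mathbf{z}_2)$. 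Second, the ``main obstacle'' you anticipate does not arise: the paper fixes the labeling so that $\tilde{h}_1\leq\tilde{h}_2$ and it is always the \emph{stronger} group that scales down by $\tilde{h}_1/\tilde{h}_2\leq 1$, so the per-antenna power constraint is never violated and no outage truncation is needed; the price is simply that the user's effective SNR is governed by the weaker group's gain $\tilde{h}_1$, exactly as in the definition of $SNR_{\tilde{y}}$. (Relatedly, the case $SNR_{\tilde{w}_2}$, in which the scaled $\mathbf{x}_2$ is exposed, corresponds to the curious server sitting in $\mathcal{S}_1$ and cancelling its own group, not to $j\in\mathcal{S}_2$; your ``minimum over the two group assignments'' conclusion is nonetheless correct.)
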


\section{A Joint PIR Scheme Over Public AWGN MISO channel}\label{sec:Scheme}

This section presents the schemes and the proofs for a PIR over the public AWGN MISO scheme.

\subsection{A Joint Scheme for Secure PIR Over Non-Fading AWGN MAC With $N=2$ Servers}\label{sec:two_servers_nonfading}
Consider the simple case with $N=2$ servers, as depicted in Figure \ref{fig:PIR_Wiretap_simpleModel}.
To ensure privacy constraints, focusing on the server with the best channel conditions is sufficient. 
In this two-server scenario, channel reciprocity is assumed; thus, without loss of generality, we designate server $2$ as the malicious server.
Let $\b{y}$ and $\b{z}_2$ represent the received inputs at the user and the malicious server, respectively.
Over a transmission of $n$ symbols, the received input can be described as,
\begin{equation*}
    \b{y}(i)=\b{x}_1+\b{x}_2+\b{n}_y
\end{equation*}  
and
\begin{equation*}
    \b{z}_2(i)=\b{x}_1+\b{n}_{j,w}
\end{equation*}  
where $\b{n}_y\sim \mathcal{N}(0,\sigma^2_y\b{I}^{n\times n})$ and $\b{n}_w\sim \mathcal{N}(0,\sigma_w^2\b{I}^{n\times n})$  are the noises at the legitimate receiver and the curious server, respectively \eqref{eq:CSIT_malicuse},\eqref{eq:MAC_output_model}.
Note that the index $i$ in the received input denotes the private index of the desired message.

In the following, we provide an achievability scheme that proves Theorem \ref{th:the PIR rate for 2S public AWGN MISO channel}.
In our suggested scheme, we use the results provided in \cite{ling2014semantically}, which propose a wiretap lattice code that achieves strong secrecy and semantic security over the Gaussian wiretap channel (GWC).
Their scheme achieves the secrecy capacity of the GWC up to a constant gap of $\frac{1}{2}$ nat.
\subsubsection{Coding Scheme}
Let $\Lambda_w\subset \Lambda_f $ be a nested chain of $n$-dimensional lattices, such that
\begin{equation*}
    R=\frac{1}{n}\log |\Lambda_f/\Lambda_w|
\end{equation*}

\noindent
According to \cite[Lemma 5]{nazer2011compute}, there exists a one-to-one mapping function, $\phi(\cdot)$, between a element $\b{s}\in\mathbb{F}_p$, and a coset lattice point $\Tilde{\boldsymbol{\lambda}}\in \Lambda_f/\Lambda_w$. 
For simplicity, we assume the coset representatives set is ${\Lambda_f\cap\cV(\Lambda_w)}$, namely:
$$s\in\mathbb{F}_p \mapsto \boldsymbol{\lambda}=(\lambda_1,...,\lambda_n)\in\Lambda_f\cap\cV(\Lambda_w).$$

The set of coset representatives is not unique: One could choose $\boldsymbol{\lambda} \in \Lambda_f \cap \mathcal{R}(\Lambda_w)$ for any fundamental region $\mathcal{R}(\Lambda_w)$, not
necessarily the Voronoi region $\mathcal{V}(\Lambda_w)$.

\noindent \textit{Query:} The user, which is interested in the message $W_i$, generates a random vector $\b{b}$ of length $M$ such that each entry is either $1$ or $0$, independently and with equal probability. Then, the user sends the following queries to the servers,
\begin{equation}\label{equ-queries structure PIR scheme}
\begin{aligned}
&Q_{1}(i)=\b{b}, \ \ Q_{2}(i)=-\b{b}-\b{e}_i, \ \text{ if } b_i=0\\
&Q_{1}(i)=\b{b}, \ \ Q_{2}(i)=-\b{b}+\b{e}_i, \ \text{ if } b_i=1
\end{aligned}
\end{equation}

\noindent 
\textit{Answers:} Upon receiving the queries, the servers construct their responses by computing linear combinations of the messages, where the combining coefficients are determined by the entries in $Q_j(i)$. 
That is,
\begin{equation}\label{equ-servers answer formation}
\begin{aligned}
&\b{A}_k=\sum_{m=1}^M Q_{k}(i)W_m,
\end{aligned}
\end{equation}
where $Q_{k}(m)$ is the $m$th entry of the vector $Q_{k}$.
We note that $A_k=(a_k^{(1)},\dots,a_k^{(L)})\in \mathbb{F}_p^L$, and the scheme is focused on the transmission of a single symbol $a_k^{(m)}$, where $1\leq m \leq L$, from each answer.
To construct the entire message, the servers must iterate this process across all $L$ symbols.

Without loss of generality, assume $m=1$, i.e., the servers wish to transmit the first symbol from each answer, that is, $a_1^{(1)} $ and $a_2^{(1)}$.
Note that $\b{A}_1+\b{A}_{2}$ is equal to either $W_i$, or $-W_i$. 
This depends on the sign of $b_i$, which is known to the user.
In the same way $a_1^{(m)}+a_2^{(m)} = \pm W_i^{(m)}$.
To encode the symbols, each server maps its symbols to the codebook as follows: $\boldsymbol{\lambda}_1 = \boldsymbol{\phi}(a_1^{(1)})$ and  $\boldsymbol{\lambda}_2 = \boldsymbol{\phi}(a_2^{(1)})$.
Then, each server samples randomly and independently a lattice point from the relevant coset according to $D_{\Lambda_w+\boldsymbol{\lambda}_i,\sqrt{P}}$ where $i\in\{1,2\}$ and $P$ is the transmission power.
Equivalently, the servers transmissions are:
\begin{equation}
\begin{split}
    &\mathbf{x_1} = \boldsymbol{\lambda}_1 + \mathbf{r}_1, \ \ \mathbf{r}_1\sim D_{\Lambda_w,\sqrt{P},-\boldsymbol{\lambda}_1}
    \\&
    \mathbf{x_2} = \boldsymbol{\lambda}_2 + \mathbf{r}_2, \ \ \mathbf{r}_2\sim D_{\Lambda_w,\sqrt{P},-\boldsymbol{\lambda}_2}
\end{split}
\end{equation}
\textcolor{black}{This construction introduces uncertainty for the curious server, preventing it from inferring the desired message.}

\textit{Decode:} To decode $\mathbf{v}\overset{\Delta}{=}\boldsymbol{\phi}(W_i^{(1)})$, the user computes the following, 
\begin{equation*}
\hat{\b{v}}=[Q_{\Lambda_f}(\alpha\b{y})]\ \text{ mod } \Lambda_w,
\end{equation*}   
where $\alpha=\frac{P}{P+\sigma^2_y}$ is the MMSE coefficient \cite{ling2014semantically}.
\textcolor{black}{Note that the modulo operation is with respect to $\Lambda_w$.}
The above reduces to the Modulo-Lattice Additive Noise (MLAN) channel (\cite{erez2004achieving}) as follows,
\begin{equation*}
    \begin{split}
        \hat{\b{v}}&=[Q_{\Lambda_f}(\alpha\b{y})]\ \text{ mod } \Lambda_w
        \\&
        =[Q_{\Lambda_f}(\alpha(\b{x}_1+\b{x}_2+\b{n}_y))]\ \text{ mod } \Lambda_w
        \\&
        =[Q_{\Lambda_f}(\b{x}_1+\b{x}_2+(\alpha-1)(\b{x}_1+\b{x}_2)+\alpha\b{n}_y))]\ \text{ mod } \Lambda_w
        \\&
        =[Q_{\Lambda_f}(\b{x}_1+\b{x}_2+\b{\Tilde{n}}_y))]\ \text{ mod } \Lambda_w
    \end{split}
\end{equation*}
\noindent where we denote the equivalent noise as ${\b{\Tilde{n}}_y\overset{\Delta}{=}(\alpha-1)(\b{x}_1+\b{x}_2)+\alpha\b{n}_y}$ where $\frac{1}{n}\EX[\norm{\b{\Tilde{n}}_y}]=0$. 
Since $\mathbf{x}_{1}$ and $\mathbf{x}_{2}$ drawn independently  from $D_{\Lambda_w+\boldsymbol{\lambda}_i,\sqrt{P}}$, the second moment of ${\b{\Tilde{n}}_y}$ is given by
${\Tilde{\sigma}_y^{2}=\frac{1}{n}\EX\left[\norm{\mathbf{n}_{eq}^2}\right] =2(1-\alpha)^2P+\alpha^2 \sigma_{y}^2}$ where we can optimize it on $\alpha$ getting $\alpha_{opt}=\frac{2P}{2P+\sigma_{y}^2}$
and the resulting optimal second moment
is $\Tilde{\sigma}^{2}_{y,opt}=\frac{2P\sigma_{y}^2}{2P+\sigma_{y}^2}$.

Using \eqref{eq:QuantizationMod} we have,
\begin{equation*}
    \begin{split}
        &[Q_{\Lambda_f}(\b{x}_1+\b{x}_2+\b{\Tilde{n}}_y)]\ \text{ mod } \Lambda_w
        \\
       &  = [Q_{\Lambda_f}([\b{x}_1+\b{x}_2+\b{\Tilde{n}}_y]\text{ mod } \Lambda_w)]\ \text{ mod } \Lambda_w
       \\
       &   = [Q_{\Lambda_f}([\b{v}+\b{r}_1+\b{r}_2+\b{\Tilde{n}}_y]\text{ mod } \Lambda_w)]\ \text{ mod } \Lambda_w
       \\
       &\overset{(a)}{=} [Q_{\Lambda_f}([\b{v}+\b{\Tilde{n}}_y]\text{ mod } \Lambda_w)]\ \text{ mod } \Lambda_w
    \end{split}
\end{equation*}
\noindent (a) since $\b{r}_1\in\Lambda_w$ and $\b{r}_2\in\Lambda_w$ together with \eqref{eq:dist}.
Therefore
$$\hat{\b{v}}=[Q_{\Lambda_f}(\b{v}+\b{\Tilde{n}}_y)]\ \text{ mod } \Lambda_w$$


Then, from the \textit{AWGN-goodness} of $\Lambda_f$, it follows that $P_e$ tends to $0$ exponentially fast if:
\begin{equation}\label{eq:AWGN-good}
 \gamma_{\Lambda_{y}}\left(\tilde{\sigma}_{y}\right)
 =\frac{V\left(\Lambda_{y}\right)^{2 / n}}{\tilde{\sigma}_{y}^{2}}>2 \pi e,
\end{equation}
Additionally, since \textit{secrecy-goodness} of $\Lambda_w^{(n)}$ (with respect to $\Tilde{\sigma}_w=\frac{\sqrt{P}\sigma_w}{\sqrt{P^2+\sigma_w^2}}$), we have \cite{erez2004achieving,ling2014semantically}:
\begin{equation}\label{eq:Secrecy-good}
 \gamma_{\Lambda_{w}} \left(\tilde{\sigma}_{w}\right)
 = 
 \frac{V\left(\Lambda_{w}\right)^{2 / n}}{\tilde{\sigma}_{w}^{2}}<2 \pi.
\end{equation}

\noindent We have that strong secrecy rates $R$ satisfying
\begin{equation}\label{eq:R_2serv1}
R=\frac{1}{n} \log \frac{V\left(\Lambda_{w}\right)}{V\left(\Lambda_{y}\right)}
<
\frac{1}{2} \log \left(\frac{\frac{1}{2}+SNR_y}{1+SNR_w}\right)-\frac{1}{2}
\end{equation}
where $SNR_y = \frac{P}{\sigma_y^2}$,and $SNR_y = \frac{P}{\sigma_w^2}$.\\
Two extra conditions are required for the average power constraint and for the equivalent noise to be asymptotically Gaussian, that is
\begin{equation*}
V\left(\Lambda_{w}\right)^{2 / n}<2 \pi\frac{P}{\pi-1}, \ 
V\left(\Lambda_{w}\right)^{2 / n}<2 \pi\frac{P}{1+1 / \rho_{b}}
\end{equation*}
which together with \eqref{eq:AWGN-good} yields:
\begin{equation}\label{eq:R_2serv2}
\begin{split}
&R<\frac{1}{2} \log \left(\frac{\frac{1}{2}+\SNR_{y}}{\frac{\pi}{\pi-1 / e}}\right)-\frac{1}{2}
\\&
R<\frac{1}{2} \log \left(SNR_{y}\frac{\frac{1}{2}+ SNR_y}{1+ SNR_y}\right)-\frac{1}{2}
\end{split}
\end{equation}

Combining \eqref{eq:R_2serv1}-\eqref{eq:R_2serv2} together, we obtain Theorem \ref{th:the PIR rate for 2S public AWGN MISO channel}.

\begin{remark}
    Although the servers generate $r_1$ and $r_2$ independently, the user can decode the message.
\end{remark}

Next, we demonstrate that the \textit{privacy} requirement \eqref{equ-secrecy-privacy} for the $j$-th server is fulfilled.
Lemma \ref{lemma:theta,queries,messages} below shows that regardless of the codewords construction, the query and the messages reveal nothing about the index $\theta$.
\begin{lemma}\label{lemma:theta,queries,messages}
    \begin{equation}
        I(\theta;Q_g(\theta),W_1^M)=0, \quad g\in\{1,2\}
    \end{equation}
\end{lemma}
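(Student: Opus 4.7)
The plan is to show that for each server $g \in \{1,2\}$, the query $Q_g(\theta)$ is statistically independent of $\theta$, and then use the fact that the messages $W_1^M$ are drawn uniformly and independently of the user's intent (as reflected in \eqref{eq:const_querry}) to conclude $(Q_g(\theta), W_1^M) \perp \theta$. Formally, since $W_1^M$ is independent of $(\theta, Q_g(\theta))$, the chain rule for mutual information gives $I(\theta; Q_g(\theta), W_1^M) = I(\theta; Q_g(\theta))$, so the task reduces to showing $I(\theta; Q_g(\theta)) = 0$.

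For $g = 1$ this is immediate: by construction, $Q_1(\theta) = \b{b}$, where $\b{b} \in \{0,1\}^M$ is generated uniformly at random and does not depend on $\theta$. Hence $Q_1(\theta) \perp \theta$.

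For $g = 2$ the strategy is to verify that, conditioned on $\theta = i$, the vector $Q_2(i)$ is uniformly distributed on $\{-1,0\}^M$ regardless of $i$. I would analyze its entries one by one. For $k \neq i$, the $k$-th entry of $Q_2(i)$ equals $-b_k$, which is uniform on $\{-1,0\}$. For $k = i$, a short case analysis using \eqref{equ-queries structure PIR scheme} shows that the $i$-th entry equals $-b_i - 1$ when $b_i = 0$ (giving $-1$) and $-b_i + 1$ when $b_i = 1$ (giving $0$); equivalently, it equals $b_i - 1$, which is again uniform on $\{-1,0\}$. Since the $b_k$'s are mutually independent, the $M$ entries of $Q_2(i)$ are independent, so $Q_2(i)$ is uniform on $\{-1,0\}^M$. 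This conditional distribution does not depend on $i$, which means $Q_2(\theta) \perp \theta$.

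Combining the two cases yields $I(\theta; Q_g(\theta)) = 0$ for $g \in \{1,2\}$, and then appending the independent $W_1^M$ gives the claim. There is no real obstacle here beyond the explicit case check for the $i$-th coordinate of $Q_2(i)$; the key insight is that the random mask $\b{b}$ is chosen so that the coordinate carrying the "$\pm \b{e}_i$" perturbation is indistinguishable, in distribution, from the other coordinates.
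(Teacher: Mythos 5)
Your proof is correct and follows essentially the same route as the paper: show that $Q_g(\theta)$ is uniformly distributed (on $\{0,1\}^M$ for $g=1$ and on $\{-1,0\}^M$ for $g=2$) regardless of the realization of $\theta$, hence independent of it, and then append the independent messages $W_1^M$. Your explicit coordinate-by-coordinate case check for the $i$-th entry of $Q_2(i)$ is just a more detailed spelling-out of the paper's one-line claim that each entry is equiprobable given $\theta=i$.
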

\begin{proof}
    This lemma follows since  $Q_g(\theta)$ is independent of $\theta$. 
    This can be verified from the probability function of $Q(\theta)$, that is $P(Q(\theta)_j=a|\theta = i)=P(Q(\theta)_j=a)=1/2$ for any $j\in\{1\dots,M)$.  
    Thus, $Q(\theta)$ is uniformly distributed over $\{0,1\}^M$ or $\{0,-1\}^M$.
    Additionally, $W_1^M$ is independent of $\theta$ and $Q_g(\theta)$.
\end{proof}
\begin{lemma}\label{lemma:strong_sec}
    \begin{equation}
    \lim_{n\rightarrow\infty}I(\lambda_1,\lambda_2;\mathbf{w}_j)=0, \quad  j\in\{1,2\}
    \end{equation}
\end{lemma}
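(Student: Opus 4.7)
The plan is to reduce the wiretap observation to the Gaussian wiretap lattice-code setting of \cite{ling2014semantically} and invoke their flatness-factor information-leakage bound.

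First, I would exploit the CSIT self-cancellation \eqref{eq:CSIT_malicuse}. Taking WLOG $j=2$, after subtracting its own transmission server 2 effectively observes $\mathbf{z}_2 = g_{2,1}\mathbf{x}_1 + \mathbf{n}_{2,w}$, which for the non-fading two-server case is $\mathbf{x}_1 + \mathbf{n}_{2,w}$. Because $\mathbf{x}_2 = \lambda_2 + \mathbf{r}_2$ is generated internally by server 2, the coset $\lambda_2$ is already determined by server 2's local view, so the essential task is to bound $I(\lambda_1;\mathbf{z}_2)$; the symmetric case $j=1$ is handled by swapping the roles of the two servers.

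Next, I would apply the MMSE pre-scaling $\alpha_w = \frac{g_{2,1} P}{g_{2,1}^{2}P+\sigma_w^2}$ (mirroring the legitimate-receiver derivation of Section \ref{sec:two_servers_nonfading}) to cast $\mathbf{z}_2$ as an MLAN channel with equivalent noise variance
\begin{equation*}
\tilde\sigma_w^{2} \;=\; \frac{g_{2,1}^{2}\,P\,\sigma_w^{2}}{g_{2,1}^{2}P+\sigma_w^{2}}.
\end{equation*}
Since $\mathbf{x}_1 \sim D_{\Lambda_w+\lambda_1,\sqrt{P}}$, this matches the coset-based Gaussian wiretap lattice code of \cite{ling2014semantically}. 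Their strong-secrecy theorem bounds $I(\lambda_1;\mathbf{z}_2)$ by a polynomial factor times the flatness factor $\epsilon_{\Lambda_w}(\tilde\sigma_w)$; under the secrecy-goodness hypothesis \eqref{eq:Secrecy-good} this flatness factor vanishes super-polynomially in $n$, giving $I(\lambda_1;\mathbf{z}_2)\to 0$.

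To assemble the joint statement, I would use the chain rule $I(\lambda_1,\lambda_2;\mathbf{w}_2) = I(\lambda_2;\mathbf{w}_2) + I(\lambda_1;\mathbf{w}_2\mid\lambda_2)$: after conditioning on $\lambda_2$ and absorbing the deterministic $\mathbf{x}_2$-contribution via self-cancellation, the second term collapses to $I(\lambda_1;\mathbf{z}_2)$, which the preceding step controls; the first term is handled by the analogous argument with the roles of $\lambda_1$ and $\lambda_2$ exchanged. The main obstacle will be verifying that Ling et al.'s flatness-factor bound still governs the leakage when the two encoders employ \emph{independent} discrete-Gaussian randomness and both contributions traverse the same wiretap channel: one must check that the MMSE-scaled sum of the two lattice-Gaussian transmissions plus the AWGN still satisfies the smoothing condition on $\Lambda_w$. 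In the non-fading two-server case this is immediate via self-cancellation, but the fading/multi-server extension underlying Theorem \ref{th:the PIR rate for block fading public AWGN MISO channel} requires tracking the channel-dependent $\tilde\sigma_{\tilde w_i}$ and the $\mathcal{S}_1,\mathcal{S}_2$ partition so that the single-user wiretap bound applies coordinate-by-coordinate.
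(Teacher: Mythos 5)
Your proposal is correct and follows essentially the same route as the paper: a chain-rule decomposition of $I(\lambda_1,\lambda_2;\mathbf{w}_j)$, conditioning to strip away the other group's contribution so that each term reduces to a single-user observation $\mathbf{z}_i=\mathbf{x}_i+\mathbf{n}_{j,w}$, and then invoking the strong secrecy (secrecy-goodness) of the Ling et al.\ lattice wiretap code to conclude $I(\lambda_1;\mathbf{z}_1)+I(\lambda_2;\mathbf{z}_2)\rightarrow 0$. The only differences are cosmetic (you condition on $\lambda_2$ first rather than $\lambda_1$) plus the fact that you spell out the MMSE/flatness-factor mechanics that the paper leaves to the citation.
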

\begin{proof}
\begin{equation*}
\begin{split}
     &I(\lambda_1,\lambda_2;\mathbf{w}_j) 
     \\
     &=I(\lambda_1;\mathbf{w}_j) + I(\lambda_2;\mathbf{w}_j\mid \lambda_1)
     \\
     &=H(\lambda_1)-  H(\lambda_1\mid \mathbf{w}_j )
     + H(\lambda_2\mid \lambda_1) - H(\lambda_2\mid \lambda_1,\mathbf{w}_j)  
     \\
     &\overset{(a)}{\leq}H(\lambda_1)-  H(\lambda_1\mid \mathbf{w}_j,\lambda_2)
     + H(\lambda_2) - H(\lambda_2\mid \lambda_1,\mathbf{w}_j)  
    \\
    &\overset{}{=}H(\lambda_1)-  H(\lambda_1\mid \lambda_1+n_{j,w},\lambda_2)
    \\& \hspace{2cm}
     + H(\lambda_2) - H(\lambda_2\mid \lambda_2+n_{j,w},\lambda_1)
     \\&
     \overset{(b)}{=}H(\lambda_1)-  H(\lambda_1\mid \lambda_1+n_{j,w})
     + H(\lambda_2) - H(\lambda_2\mid \lambda_2+n_{j,w})
     \\&
     \overset{}{=}I(\lambda_1;\mathbf{z}_1) + I(\lambda_2;\mathbf{z}_2)
     \overset{n\rightarrow\infty}{\longrightarrow}0
    \end{split}
\end{equation*}
    where (a) follows since conditioning reduces entropy and since $\lambda_1$ is independent with $\lambda_2$.
    (b) follows again since $\lambda_1$ independent with $\lambda_2$.
    The final step leverages the strong secrecy ensured by our scheme, specifically preventing the curious server from decoding $\lambda_i$ from the received signal.

Now, we ready to prove the \textit{Privacy} \eqref{equ-secrecy-privacy} requirement is fulfilled,
\begin{equation*}
\begin{split}   
&I(\theta;Q_j(\theta),W_1^M,\mathbf{w}_j)
\\&
=I(\theta;Q_j(\theta),W_1^M)+I(\theta;\mathbf{w}_j\mid Q_j(\theta),W_1^M)
\\&
\overset{(a)}{=} H(\mathbf{w}_j\mid Q_j(\theta),W_1^M)-H(\mathbf{w}_j\mid Q_j(\theta),W_1^M,\theta)
\end{split}
\end{equation*}
where (a) is by Lemma \ref{lemma:theta,queries,messages}.
Assume, without loss of generality, that $j=2$.
We have,
\begin{equation*}
    \begin{split}
    &
    H(\mathbf{w}_2\mid Q_2(\theta),W_1^M)-H(\mathbf{w}_2\mid Q_2(\theta),W_1^M,\theta)
    \\&
    \leq H(\mathbf{w}_2\mid Q_2(\theta),W_1^M)-H(\mathbf{w}_2\mid Q_2(\theta),W_1^M,\theta,\lambda_1,\lambda_2)      
    \\&
    \overset{(b)}{\leq} H(\mathbf{w}_2) - H(\mathbf{w}_2\mid\lambda_1,\lambda_2) \overset{n\rightarrow\infty}{\longrightarrow}0  
    \end{split}
\end{equation*}
(b) is due to the following Markov chain $\theta \leftrightarrow \left(Q_1(\theta),Q_2(\theta),W_1^M\right) \leftrightarrow \lambda_1,\lambda_2 \leftrightarrow \mathbf{w}_j$.
The last step is due to Lemma \ref{lemma:strong_sec}.
\end{proof}

\subsection{A Joint Scheme for Secure PIR Over Non-Fading AWGN MAC With $N>2$ Servers}
This section extends the previously discussed results to scenarios involving more than $N=2$ servers.
Specifically, over a transmission of $n$ symbols, the received signal at the user's antenna is given by:
\begin{equation}
\b{y}(i)=\sum_{k=1}^{N}\b{x}_k(i)+\mathbf{n}_y,
\end{equation}
where $\b{x}_k$ represents the signal transmitted by the $k$-th server, and $\mathbf{n}_y$ denotes the noise at the user.
Similarly, the received signal at the curious server antenna is described as:
\begin{equation}
\b{w}_j(i)=\sum_{k=1}^{N}\b{x}_k(i)+\mathbf{n}_{w},
\end{equation}
where $\mathbf{n}_w$ represents the noise at the malicious server.
\textcolor{black}{It is assumed that the signals received by all other servers are subject to noisier channels.
Consequently, ensuring security against the malicious server inherently provides protection against all other servers.}

As proposed in \cite{shmuel2021private}, the user generates two queries (such as \eqref{equ-queries structure PIR scheme}) and divides the $N$ servers randomly into two groups of $\floor{\frac{N}{2}}$ (in case $N$ is odd we can neglect one server).
Each group subsequently constructs its answer based on the query assigned to it.
\textcolor{black}{
The main distinction from the two-server scheme is lies in ensuring that each group generates the same random lattice point (coset representative). 
Intuitively, this alignment is crucial to our scheme, as the transmissions must sum coherently to enable successful decoding. To achieve this coordination, the user provides each group with a uniform binary vector, ensuring consistent generation of the lattice points within each group.
This vector is used to generate a sample from the relevant distribution \cite{cover2012elements}\footnote{Generating a sample from a discrete distribution is a well-studied topic \cite{cover2012elements}.
The expected number of fair bits required by an optimal algorithm to generate a random variable $X$ is less than $H(X) + 2$.}, specifically
$D_{\Lambda_w+\boldsymbol{\lambda}_i,\sqrt{P}}$ where $i\in\{1,2\}$.
}
Thus, the received input at the user and the malicious server is as follows:
\begin{equation}
\begin{split}
    \b{y}(i) = \floor{\frac{N}{2}}\b{x}_1(i)+\floor{\frac{N}{2}}\b{x}_2(i)+\mathbf{n}_y\\
    \b{w}(i) =\floor{\frac{N}{2}}\b{x}_1(i) + \floor{\frac{N}{2}}\b{x}_2(i)+\mathbf{n}_w
\end{split}
\end{equation}
Without the loss of generality, assume that the malicious server belongs to group $2$ and can perfectly subtract the answers corresponding to its group\footnote{\textcolor{black}{Note that this assumption is quite strong, as the malicious server would need to know the exact number of servers participating in the transmission — a nontrivial piece of information.}}. 
Its objective becomes decoding $\b{x}_1$ from the remaining received signals,
\begin{equation}
\begin{split}
    \b{z}_1(i) =\floor{\frac{N}{2}}\b{x}_1(i) +\mathbf{n}_w
\end{split}
\end{equation}
\noindent This case is equivalent to the two servers scenario where the transmission power at each antenna is $(\floor{\frac{N}{2}})^2P$.

The only thing we have to verify is the decoding stage.

\noindent\textit{Decode:} 
To decode $\mathbf{v}\overset{\Delta}{=}\boldsymbol{\phi}(W_i^{(1)})$, the user computes the following, 
\begin{equation*}
\hat{\b{v}}=[Q_{\Lambda_f}(\alpha'\b{y})]\ \text{ mod } \Lambda_w,
\end{equation*}   
where $\alpha'=\alpha\frac{1}{\floor{\frac{N}{2}}}$ and $\alpha$ will be given below.
Thus,
\begin{equation*}
    \begin{split}
        \hat{\b{v}}&=[Q_{\Lambda_f}(\alpha'\b{y})]\ \text{ mod } \Lambda_w
        \\&
        =[Q_{\Lambda_f}(\alpha(\b{x}_1+\b{x}_2+\frac{1}{\floor{\frac{N}{2}})}\b{n}_y)]\ \text{ mod } \Lambda_w
        \\&
        =[Q_{\Lambda_f}(\b{x}_1+\b{x}_2+(\alpha-1)(\b{x}_1+\b{x}_2)+\alpha\frac{1}{\floor{\frac{N}{2}})}\b{n}_y))]\ \text{ mod } \Lambda_w
        \\&
        [Q_{\Lambda_f}(\b{x}_1+\b{x}_2+\b{\Tilde{n}}_y))]\ \text{ mod } \Lambda_w
    \end{split}
\end{equation*}
\noindent where we denote the equivalent noise as ${\b{\Tilde{n}}_y\overset{\Delta}{=}(\alpha-1)(\b{x}_1+\b{x}_2)+\alpha'\b{n}_y}$ where $\frac{1}{n}\EX[\norm{\b{\Tilde{n}}_y}]=0$. 
Since $\mathbf{x}_{1}$ and $\mathbf{x}_{2}$ drawn independently  from $D_{\Lambda_w+\boldsymbol{\lambda}_i,\sqrt{P}}$, the second moment of ${\b{\Tilde{n}}_y}$ is given by
${\Tilde{\sigma}_y^{2}=\frac{1}{n}\EX\left[\norm{\mathbf{n}_{eq}^2}\right] =2(1-\alpha)^2P+\alpha^2 \floor{\frac{N}{2}}^{-2}\sigma_{y}^2}$ where we can optimize it on $\alpha$ getting $\alpha_{opt}=\frac{2P}{2P+\floor{\frac{N}{2}}^{-2}\sigma_{y}^2}$
and the resulting optimal second moment
is $\Tilde{\sigma}^{2}_{y,opt}=\frac{2P\floor{\frac{N}{2}}^{-2}\sigma_{y}^2}{2P+\floor{\frac{N}{2}}^{-2}\sigma_{y}^2}$.

Using \eqref{eq:QuantizationMod} we have,
\begin{equation*}
    \begin{split}
        &[Q_{\Lambda_f}(\b{x}_1+\b{x}_2+\b{\Tilde{n}}_y)]\ \text{ mod } \Lambda_w
        \\
       &  = [Q_{\Lambda_f}([\b{x}_1+\b{x}_2+\b{\Tilde{n}}_y]\text{ mod } \Lambda_w)]\ \text{ mod } \Lambda_w
       \\
       &   = [Q_{\Lambda_f}([\b{v}+\b{r}_1+\b{r}_2+\b{\Tilde{n}}_y]\text{ mod } \Lambda_w)]\ \text{ mod } \Lambda_w
       \\
       &\overset{(a)}{=} [Q_{\Lambda_f}([\b{v}+\b{\Tilde{n}}_y]\text{ mod } \Lambda_w)]\ \text{ mod } \Lambda_w
    \end{split}
\end{equation*}
\noindent (a) since $\b{r}_1\in\Lambda_w$ and $\b{r}_2\in\Lambda_w$ together with \eqref{eq:dist}.
Therefore
$$\hat{\b{v}}=[Q_{\Lambda_f}(\b{v}+\b{\Tilde{n}}_y)]\ \text{ mod } \Lambda_w $$
Finally, following the same steps as in the two servers,
as $n\rightarrow\infty$, we have that strong secrecy rate satisfies:
\begin{equation*}
    R<\frac{1}{2} \log^+ \left(\min \left\{\frac{\frac{1}{2}+SNR_{\tilde{y}}}{1+SNR_{\tilde{w}}}, SNR_{y}'\frac{\frac{1}{2}+ SNR_{\tilde{y}}}{1+ SNR_{\tilde{y}}}\right\}\right)-\frac{1}{2}
\end{equation*}
where $SNR_{\tilde{y}}=\floor{\frac{N}{2}}^{2}\frac{P}{\sigma^2_y}$ and $SNR_{\tilde{w}}=\floor{\frac{N}{2}}^{2}\frac{P}{\sigma^2_w}$.

\subsection{A Joint Scheme for PIR Over Public Block-Fading AWGN MAC with $N>2$ Servers}
In this section, we consider a block-fading AWGN MAC.
As discussed in the previous section, the transmissions here are affected not only by noise but also by attenuation due to the channel's fading characteristics \eqref{eq:MAC_output_model}.
The core concept of the proposed scheme involves dividing the servers into two groups and exploiting the channel characteristics to cancel the messages between them, leaving only the desired message at the receiver. The primary challenge is ensuring that the responses from both groups are properly aligned. To address this, the group with the higher channel gain reduces its transmission power to match the overall power of the two groups, facilitating alignment at the receiver.

\subsubsection{Coding Scheme}
The construction of the codebook is the same as in the two-server scheme.

\noindent \textit{Query:} The user, which is interested in the message $W_i$, generates a random vector $\b{b}$ of length $M$ such that each entry is either $1$ or $0$, independently and with equal probability. Then, the user divides the databases into two non-intersecting subsets, denoted as $\mathcal{S}_1$ and $\mathcal{S}_2$, for which he sends the query $Q_1(i)$ to each member in $\mathcal{S}_1$ and $Q_2(i)$ to each member in $\mathcal{S}_2$. 
The queries are given as follows
\begin{equation}\label{equ-queries structure PIR scheme - fading}
\begin{aligned}
&Q_{1}(i)=\b{b}, \ \ Q_{2}(i)=-\b{b}-\b{e}_i, \ \text{ if } b_i=0\\
&Q_{1}(i)=\b{b}, \ \ Q_{2}(i)=-\b{b}+\b{e}_i, \ \text{ if } b_i=1
\end{aligned}
\end{equation}

\noindent 
\textit{Answers:} Upon receiving the queries, the servers construct their responses by computing linear combinations of the messages, where the combining coefficients are determined by the entries in $Q_j(i)$. 
That is,
\begin{equation}\label{equ-servers answer formation-fading}
\b{A}_k=\sum_{m=1}^M Q_{k}(m)W_m,
\end{equation}
where $Q_{k}(m)$ is the $m$th entry of the vector $Q_{k}$.
We note that $A_k=(a_k^{(1)},\dots,a_k^{(L)})\in \mathbb{F}_p^L$, and the scheme is focused on the transmission of a single symbol $a_k^{(m)}$, where $1\leq m \leq L$, from each answer.
To construct the entire message, the servers must iterate this process across all $L$ symbols.

Without any loss of generality, assume $m=1$, i.e., the servers wish to transmit the first symbol from each answer, that is  $a_1^{(1)} $ and $a_2^{(1)}$.
Note that $\b{A}_1+\b{A}_{2}$ is equal to either $W_i$, or $-W_i$. 
This depends on the sign of $b_i$, which is known to the user.
In the same way $a_1^{(m)}+a_2^{(m)} = \pm W_i^{(m)}$.
To encode the symbols, each server maps the symbols to the codebook, namely, $\boldsymbol{\lambda}_1 = \boldsymbol{\phi}(a_1^{(1)})$ and  $\boldsymbol{\lambda}_2 = \boldsymbol{\phi}(a_2^{(1)})$.
Then, each server samples randomly and independently a lattice point from the relevant coset according to $D_{\Lambda_w+\boldsymbol{\lambda}_i,\sqrt{P}}$ where $i\in\{1,2\}$ and $P$ is the transmission power.
Equivalently, the $i$-th server which belong to $\mathcal{S}_g$, $g\in\{1,2\}$ transmits:
\begin{equation}
\begin{split}
    &\mathbf{x_g} = \boldsymbol{\lambda}_g + \mathbf{r}_g, \ \ \mathbf{r}_g\sim D_{\Lambda_w,\sqrt{P},-\boldsymbol{\lambda}_g}
\end{split}
\end{equation}
\textcolor{black}{The scheme requires that each group will generate the same sample. 
As in the non-fading scheme, the user is required to provide each group with a uniform binary vector. 
This vector will be used to generate a sample from the relevant distribution.}

Over a transmission of $n$ symbols, the received signal at the user's antenna and the malicious server (we consider the one with the best channel gain), respectively, is given by:
\begin{equation}
\b{y}(i)=\sum_{k\in\mathcal{S}_1}|h_k|\b{x}_1(i)+\sum_{k\in\mathcal{S}_2}|h_k|\b{x}_2(i)+\mathbf{n}_y 
\end{equation}
\begin{equation}
\b{w}(i)=\sum_{k\in\mathcal{S}_1}g_k\b{x}_1(i)+\sum_{k\in\mathcal{S}_2}g_k\b{x}_2(i)+\mathbf{n}_w,
\end{equation}
where $\b{x}_k$ represents the signal transmitted by the $k$-th server, and $\mathbf{n}_y\sim \mathcal{N}(0,\sigma_y^2\b{I}^{n\times n})$  and $\mathbf{n}_w\sim \mathcal{N}(0,\sigma_w^2\b{I}^{n\times n})$ are an i.i.d., Gaussian noise, and $h_{k} \sim \mathcal{N}(0,1)$, $g_{k} \sim \mathcal{N}(0,1)$  are the real channel coefficients.
Assuming full CSI,  each server can adjust its transmission by multiplying its answer by $-1$ if necessary to ensure that the responses sum up according to the absolute values of the fading coefficients.

Assume, without loss of generality,  $\Tilde{h}_1\leq\Tilde{h}_2$. 
Thus, in order to align the transmission, the servers that belong to $\mathcal{S}_2$ will reduce the transmission and transmit the following:
\begin{equation}
\mathbf{x_2}' = \frac{\Tilde{h}_1}{\Tilde{h}_2}\mathbf{x_2}
\end{equation}
where $\Tilde{h}_m = \sum_{k\in\mathcal{S}_m} |h_k|$ and $\Tilde{g}_m = \sum_{k\in\mathcal{S}_m} g_k$ (notice that only $h_k$ coefficients are with absolute value), which results in
\begin{equation}
\begin{split}
&\b{y}(i)=\tilde{h}_1\b{x}_1(i)+\tilde{h}_2\b{x}'_2(i)+\mathbf{n}_y
=\tilde{h}_1(\b{x}_1(i)+\b{x}_2(i))+\mathbf{n}_y 
\\&
\b{w}(i)=\tilde{g}_1\b{x}_1(i)+\tilde{g}_2 \b{x}'_2(i)+\mathbf{n}_w=\tilde{g}_1\b{x}_1(i)+\tilde{g}_2\frac{\tilde{h}_1}{\tilde{h}_2} \b{x}_2(i)+\mathbf{n}_w 
\end{split}
\end{equation}
Note that the user may add additional information to the query, informing the database which group the database belongs to and the factor to be multiplied before transmission.

\textit{Decode:} 
To decode $\mathbf{v}\overset{\Delta}{=}\boldsymbol{\phi}(W_i^{(1)})$, the user computes the following, 
\begin{equation*}
\hat{\b{v}}=[Q_{\Lambda_f}(\alpha\frac{1}{\tilde{h}_1}\b{y})]\ \text{ mod } \Lambda_w,
\end{equation*}   
where $\alpha$ where $0\leq\alpha\leq1$ will be determined later.
The above reduces to the Modulo-Lattice Additive Noise (MLAN) channel (\cite{erez2004achieving}) as follows,
\begin{equation*}
    \begin{split}
        \hat{\b{v}}&=[Q_{\Lambda_f}(\alpha\frac{1}{\tilde{h}_1}\b{y})]\ \text{ mod } \Lambda_w
        \\&
        =[Q_{\Lambda_f}(\alpha(\b{x}_1+\b{x}_2+\frac{1}{\tilde{h}_1}\b{n}_y))]\ \text{ mod } \Lambda_w
        \\&
        =[Q_{\Lambda_f}(\b{x}_1+\b{x}_2+(\alpha-1)(\b{x}_1+\b{x}_2)+\alpha\frac{1}{\tilde{h}_1}\b{n}_y))]\ \text{ mod } \Lambda_w
        \\&
        =[Q_{\Lambda_f}(\b{x}_1+\b{x}_2+\b{\Tilde{n}}_y))]\ \text{ mod } \Lambda_w
    \end{split}
\end{equation*}
\noindent where we denote the equivalent noise as ${\b{\Tilde{n}}_y\overset{\Delta}{=}(\alpha-1)(\b{x}_1+\b{x}_2)+\alpha\frac{1}{\tilde{h}_1}\b{n}_y}$ where $\frac{1}{n}\EX[\norm{\b{\Tilde{n}}_y}]=0$. 
Since $\mathbf{x}_{1}$ and $\mathbf{x}_{2}$ drawn independently  from $D_{\Lambda_w+\boldsymbol{\lambda}_i,\sqrt{P}}$, the second moment of ${\b{\Tilde{n}}_y}$ is given by
${\Tilde{\sigma}_y^{2}=\frac{1}{n}\EX\left[\norm{\mathbf{n}_{eq}^2}\right] =2(1-\alpha)^2P+\alpha^2\frac{1}{\tilde{h}_1^2} \sigma_{y}^2}$ where we can optimize it on $\alpha$ getting $\alpha_{opt}=\frac{2P}{2P+\frac{1}{\tilde{h}_1^2}\sigma_{y}^2}$
and the resulting optimal second moment
is $\Tilde{\sigma}^{2}_{y,opt}=\frac{2P\frac{1}{\tilde{h}_1^2}\sigma_{y}^2}{2P+\frac{1}{\tilde{h}_1^2}\sigma_{y}^2}$.

Using \eqref{eq:QuantizationMod} we have,
\begin{equation*}
    \begin{split}
        &[Q_{\Lambda_f}(\b{x}_1+\b{x}_2+\b{\Tilde{n}}_y)]\ \text{ mod } \Lambda_w 
        \\
       &  = [Q_{\Lambda_f}([\b{x}_1+\b{x}_2+\b{\Tilde{n}}_y]\text{ mod } \Lambda_w)]\ \text{ mod } \Lambda_w 
       \\
       &   = [Q_{\Lambda_f}([\b{v}+\b{r}_1+\b{r}_2+\b{\Tilde{n}}_y]\text{ mod } \Lambda_w)]\ \text{ mod } \Lambda_w 
       \\
       &\overset{(a)}{=} [Q_{\Lambda_f}([\b{v}+\b{\Tilde{n}}_y]\text{ mod } \Lambda_w)]\ \text{ mod } \Lambda_w 
    \end{split}
\end{equation*}
\noindent (a) since $\b{r}_1\in\Lambda_w$ and $\b{r}_2\in\Lambda_w$ together with \eqref{eq:dist}.
Therefore
$$\hat{\b{v}}=[Q_{\Lambda_f}(\b{v}+\b{\Tilde{n}}_y)]\ \text{ mod } \Lambda_w $$
Thus, from the \textit{AWGN-goodness} of $\Lambda_f$, $P_e$ tends to $0$ exponentially fast if 
\begin{equation}\label{eq:AWGN-good1}
 \gamma_{\Lambda_{y}}\left(\tilde{\sigma}_{y}\right)
 =\frac{V\left(\Lambda_{y}\right)^{2 / n}}{\tilde{\sigma}_{y}^{2}}>2 \pi e,
\end{equation}
On the other hand, we assume the worst-case scenario for the curious server: that is, the server has full CSI for the legitimate channels and the inter-server channels ($h_k$ and $g_k$ for all $k$). 
Additionally, we assume that the curious server knows which servers are participating in the transmission—a nontrivial assumption given that the servers do not communicate with each other.
Hence, the curious server can cancel part of the messages according to its group.
Thus, the curious server observes one of the following,
$\b{z}_1(i)=\tilde{g}_1\b{x}_1(i)+\mathbf{n}_w $ or $\b{z}_2(i)=\tilde{g}_2\frac{\tilde{h}_1}{\tilde{h}_2} \b{x}_2(i)+\mathbf{n}_w $.
We will take care of both cases. 
For the first case we choose $\Lambda_w^{(n)}$ with respect to $$\tilde{\sigma}_{w_1}=\frac{\sqrt{P}\sigma_w\frac{1}{\tilde{g}_1}}{\sqrt{P^2+\sigma_w^2\frac{1}{\tilde{g}_1^2}}},$$
and for the second, we choose
$\Lambda_w^{(n)}$ with respect to $$\tilde{\sigma}_{w_2}=\frac{\sqrt{P}\sigma_w\frac{1}{\tilde{g}_2}\frac{\tilde{h}_2}{\tilde{h}_1}}{\sqrt{P^2+\sigma_w^2\left(\frac{1}{\tilde{g}_2}\frac{\tilde{h}_2}{\tilde{h}_1}\right)^2 }}.$$

Let us define $SNR_{\tilde{y}} = \frac{\tilde{h}_1^2 P}{\sigma_{y}^2}$, 
$SNR_{\tilde{w}_1} = \frac{\tilde{g}_1^2P}{\sigma_{w}^2} $, and 
$SNR_{\tilde{w}_2} = \frac{\left(\tilde{g}_2\frac{\tilde{h}_1}{\tilde{h}_2}\right)^2 P}{\sigma_{w}^2} $.

 Thus, from \textit{secrecy-goodness} of $\Lambda_w^{(n)}$ (with respect to $\tilde{\sigma}_{w_1}$ and $\tilde{\sigma}_{w_2}$), \cite{erez2004achieving,ling2014semantically}, we obtain Theorem \ref{th:the PIR rate for block fading public AWGN MISO channel}.

We note that the suggested scheme for the block-fading AWGN channel does not impair the \textit{Privacy} \eqref{equ-secrecy-privacy} requirement, even if the channel coefficients ($\mathbf{h}$ and $\mathbf{g}$)  are globally known.
Specifically, we have
\begin{equation*}
\begin{split}   
&I(\theta;Q_j(\theta),W_1^M,\mathbf{w},\mathbf{h},\mathbf{g})
\\
&=I(\theta;\mathbf{h},\mathbf{g})
+
I(\theta;Q_j(\theta),W_1^M,\mathbf{w}\mid\mathbf{h},\mathbf{g})
\\&
=I(\theta;Q_j(\theta),W_1^M\mid\mathbf{h},\mathbf{g})
+I(\theta;\mathbf{w}\mid Q_j(\theta),W_1^M,\mathbf{h},\mathbf{g})
\\&
\overset{(a)}{=}I(\theta;Q_j(\theta),W_1^M)
+I(\theta;\mathbf{w}\mid Q_j(\theta),W_1^M,\mathbf{h},\mathbf{g})
\\&
\overset{(b)}{=} H(\mathbf{w}\mid Q_j(\theta),W_1^M,\mathbf{h},\mathbf{g})
-H(\mathbf{w}\mid Q_j(\theta),W_1^M,\theta,\mathbf{h},\mathbf{g})
\\&
\overset{(c)}{\leq} H(\mathbf{w}\mid\mathbf{h},\mathbf{g})
-H(\mathbf{w}\mid Q_2(\theta),W_1^M,\theta,\mathbf{h},\mathbf{g})
\\&
\overset{}{=}I(\mathbf{w}; Q_2(\theta),W_1^M,\theta\mid\mathbf{h},\mathbf{g})
\\& \overset{}{\leq}
I(\mathbf{w}; Q_2(\theta),W_1^M,\theta,\lambda_1,\lambda_2\mid \mathbf{h},\mathbf{g})
\\& \overset{(d)}{=}
I(\mathbf{w};\lambda_1,\lambda_2\mid \mathbf{h},\mathbf{g})
\\&=
I(\mathbf{w};\lambda_1\mid \mathbf{h},\mathbf{g})
+
I(\mathbf{w};\lambda_2\mid \mathbf{h},\mathbf{g},\lambda_1)
\\&
=H(\lambda_1\mid\mathbf{h},\mathbf{g})
-H(\lambda_1\mid\mathbf{h},\mathbf{g},\mathbf{w})
\\& \quad + H(\lambda_2\mid\mathbf{h},\mathbf{g},\lambda_1)
-
H(\lambda_2\mid\mathbf{h},\mathbf{g},\lambda_1,\mathbf{w})
\\& 
\leq H(\lambda_1)
-H(\lambda_1\mid\mathbf{h},\mathbf{g},\mathbf{w},\mathbf{x}_2)
\\& \quad + H(\lambda_2)
-
H(\lambda_2\mid\mathbf{h},\mathbf{g},\lambda_1,\mathbf{w},\mathbf{x}_1)
\\&
\overset{(e)}{=}
H(\lambda_1)
-H(\lambda_1\mid\mathbf{z}_1) + H(\lambda_2)
-H(\lambda_2\mid\lambda_1,\mathbf{z}_2)
\\&
= I(\lambda_1;\mathbf{z}_1) + I(\lambda_2;\mathbf{z}_2)\overset{n\rightarrow\infty}{\longrightarrow}0
\end{split}
\end{equation*}
(a) since ($\mathbf{h},\mathbf{g}$) is independent with ($\theta,Q_j(\theta),W_1^M$) and (b) is due to Lemma \ref{lemma:theta,queries,messages}.
(c) assume without loss of generality that $j=2$.
(d) is due to the Markov chain $\theta\leftrightarrow\left(Q_1(\theta),Q_2(\theta),W_1^M\right)\leftrightarrow\left(\lambda_1,\lambda_2,\mathbf{h},\mathbf{g}\right)\leftrightarrow \mathbf{w}$.
(e) is due to the Markov chains
$\lambda_1\leftrightarrow\mathbf{x}_1\leftrightarrow\mathbf{z}_1\leftrightarrow(\mathbf{h},\mathbf{g},\mathbf{x}_2,\mathbf{w})$, and $\lambda_2\leftrightarrow\mathbf{x}_2\leftrightarrow\mathbf{z}_2\leftrightarrow(\mathbf{h},\mathbf{g},\mathbf{x}_1,\mathbf{w})$.
The last step follows from the strong secrecy implied by our scheme.

To analyze the achievable PIR rate in Theorem \ref{th:the PIR rate for block fading public AWGN MISO channel} we note that the user may choose $\mathcal{S}_1$ and $\mathcal{S}_2$ to maximize:

\begin{equation}\label{eq:maximization}
\max_{\mathcal{S}_1, \mathcal{S}_2}\left\{ \log \left(\min_i \left\{\frac{\frac{1}{2}+SNR_{\tilde{y}}}{1+SNR_{\tilde{w}_i}}, SNR_{\tilde{y}}\frac{\frac{1}{2}+ SNR_{\tilde{y}}}{1+ SNR_{\tilde{y}}}\right\}\right)\right\}
\end{equation}

\begin{figure}
    \centering
    \includegraphics[width=0.9\linewidth]{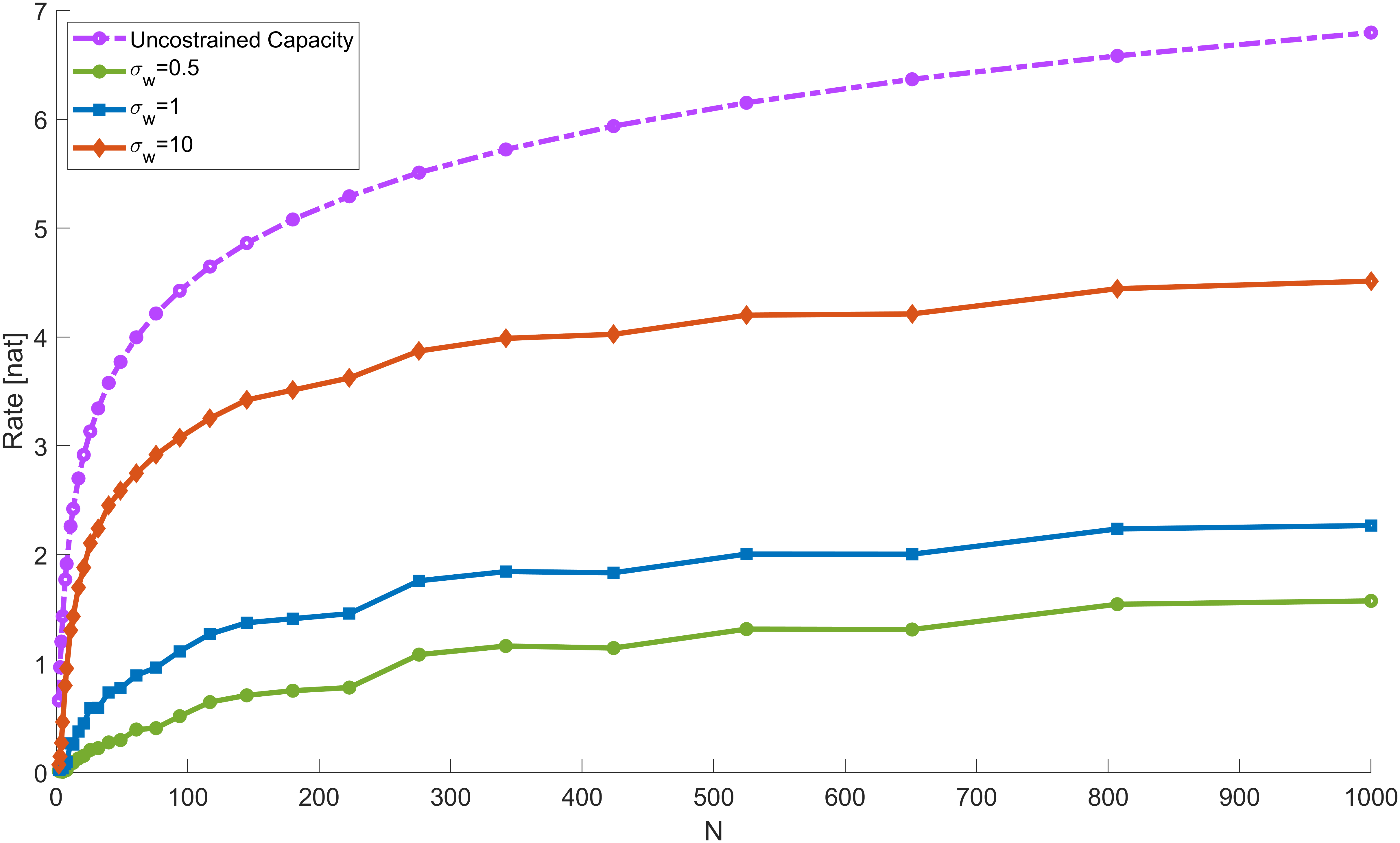}
    \caption{The average PIR rate as a function of the number of servers $N$. 
    The transmit power is set to $P=1$ and the different colors represent varying values of $\sigma_w$ with $\sigma_y$ fixed at $1$. }
    \label{fig:fading_PIR_rate_vs_N}
\end{figure}

Finding the optimal solution to this problem is closely related to the subset sum problem, known as NP-hard. 
Nevertheless, greedy algorithms can be employed to provide suboptimal solutions within a reasonable time frame.
In Figure \ref{fig:fading_PIR_rate_vs_N} we illustrate the average achievable PIR rate as a function of the number of servers, assuming $\sigma_y=\sigma_w=1$.
Interestingly, a positive PIR rate can still be achieved even with equal noise variances or even greater variance at the user antenna. 
This is primarily because the channel coefficients between the servers and the user are effectively combined due to the knowledge of CSI. In contrast, the channel coefficients observed by the curious server may cancel each other, leading to a reduced SNR for the curious server.

\section{Conclusion}
We proposed a joint PIR-coding scheme for an AWGN MAC channel, where the public nature of the responses could potentially lead to privacy leakage if one of the servers is malicious. 
In the block-fading case, we demonstrated that a positive PIR rate is achievable, even when the channel to the malicious server is better than the channel to the user.

It is worth noting that a loss of $1/2$ nat occurs due to using the wiretap coding scheme from \cite{ling2014semantically}.
This loss could potentially be avoided by employing the scheme proposed in \cite{liu2018achieving}, which utilizes a polar lattice code to eliminate the $\frac{1}{2}$-nat gap from the secrecy capacity of the Gaussian Wiretap Channel (GWC). 
However, we chose to use the scheme from \cite{ling2014semantically} for the sake of convenience and clarity in presentation, as this choice does not affect the scaling laws of our results.


\bibliographystyle{IEEEtran}
\bibliography{references}

\begin{thebibliography}{10}
\providecommand{\url}[1]{#1}
\csname url@samestyle\endcsname
\providecommand{\newblock}{\relax}
\providecommand{\bibinfo}[2]{#2}
\providecommand{\BIBentrySTDinterwordspacing}{\spaceskip=0pt\relax}
\providecommand{\BIBentryALTinterwordstretchfactor}{4}
\providecommand{\BIBentryALTinterwordspacing}{\spaceskip=\fontdimen2\font plus
\BIBentryALTinterwordstretchfactor\fontdimen3\font minus \fontdimen4\font\relax}
\providecommand{\BIBforeignlanguage}[2]{{%
\expandafter\ifx\csname l@#1\endcsname\relax
\typeout{** WARNING: IEEEtran.bst: No hyphenation pattern has been}%
\typeout{** loaded for the language `#1'. Using the pattern for}%
\typeout{** the default language instead.}%
\else
\language=\csname l@#1\endcsname
\fi
#2}}
\providecommand{\BIBdecl}{\relax}
\BIBdecl

\bibitem{sun2017capacity}
H.~Sun and S.~A. Jafar, ``The capacity of private information retrieval,'' \emph{IEEE Transactions on Information Theory}, vol.~63, no.~7, pp. 4075--4088, 2017.

\bibitem{sun2017capacityColluding}
------, ``The capacity of robust private information retrieval with colluding databases,'' \emph{IEEE Transactions on Information Theory}, vol.~64, no.~4, pp. 2361--2370, 2017.

\bibitem{banawan2018capacity}
K.~Banawan and S.~Ulukus, ``The capacity of private information retrieval from byzantine and colluding databases,'' \emph{IEEE Transactions on Information Theory}, vol.~65, no.~2, pp. 1206--1219, 2018.

\bibitem{banawan2019noisy}
------, ``Noisy private information retrieval: On separability of channel coding and information retrieval,'' \emph{IEEE Transactions on Information Theory}, 2019.

\bibitem{shmuel2021private}
O.~Shmuel and A.~Cohen, ``Private information retrieval over gaussian mac,'' \emph{IEEE Transactions on Information Theory}, vol.~67, no.~8, pp. 5404--5419, 2021.

\bibitem{orelimcorrection24}
O.~Elimelech, O.~Shmuel, and A.~Cohen, ``Corrections to “private information retrieval over gaussian mac”,'' \emph{IEEE Transactions on Information Theory}, vol.~70, no.~10, pp. 7521--7524, 2024.

\bibitem{orelimEfficientISIT}
O.~Elimelech and A.~Cohen, ``An efficient, high-rate scheme for private information retrieval over the gaussian mac,'' in \emph{2024 IEEE International Symposium on Information Theory (ISIT)}, 2024, pp. 3672--3677.

\bibitem{wang2018capacity}
Q.~Wang, H.~Sun, and M.~Skoglund, ``The capacity of private information retrieval with eavesdroppers,'' \emph{IEEE Transactions on Information Theory}, vol.~65, no.~5, pp. 3198--3214, 2018.

\bibitem{cover2012elements}
T.~M. Cover and J.~A. Thomas, \emph{Elements of information theory}.\hskip 1em plus 0.5em minus 0.4em\relax John Wiley \& Sons, 2012.

\bibitem{erez2004achieving}
U.~Erez and R.~Zamir, ``Achieving 1/2 log (1+ snr) on the awgn channel with lattice encoding and decoding,'' \emph{IEEE Transactions on Information Theory}, vol.~50, no.~10, pp. 2293--2314, 2004.

\bibitem{ling2014achieving}
C.~Ling and J.-C. Belfiore, ``Achieving awgn channel capacity with lattice gaussian coding,'' \emph{IEEE Transactions on Information Theory}, vol.~60, no.~10, pp. 5918--5929, 2014.

\bibitem{zamir2014lattice}
R.~Zamir, \emph{Lattice Coding for Signals and Networks: A Structured Coding Approach to Quantization, Modulation, and Multiuser Information Theory}.\hskip 1em plus 0.5em minus 0.4em\relax Cambridge University Press, 2014.

\bibitem{ling2014semantically}
C.~Ling, L.~Luzzi, J.-C. Belfiore, and D.~Stehl{\'e}, ``Semantically secure lattice codes for the gaussian wiretap channel,'' \emph{IEEE Transactions on Information Theory}, vol.~60, no.~10, pp. 6399--6416, 2014.

\bibitem{nazer2011compute}
B.~Nazer and M.~Gastpar, ``Compute-and-forward: Harnessing interference through structured codes,'' \emph{IEEE Transactions on Information Theory}, vol.~57, no.~10, pp. 6463--6486, 2011.

\bibitem{liu2018achieving}
L.~Liu, Y.~Yan, and C.~Ling, ``Achieving secrecy capacity of the gaussian wiretap channel with polar lattices,'' \emph{IEEE Transactions on Information Theory}, vol.~64, no.~3, pp. 1647--1665, 2018.

\end{thebibliography}
\end{document}